\documentclass[a4paper,UKenglish,cleveref, autoref, thm-restate]{lipics-v2021}

\usepackage{macros}

\title{Tight bounds for clique-packing parameterized by clique-width}  
 
\author{Narek Bojikian}{Humboldt-Universität zu Berlin, Germany}{bojikian@hu-berlin.de}{https://orcid.org/0000-0003-1072-4873}{}

\author{Stefan Kratsch}{Humboldt-Universität zu Berlin, Germany}{stefan.kratsch@hu-berlin.de}{https://orcid.org/0000-0002-0193-7239}{}

\authorrunning{N.\ Bojikian and S.\ Kratsch}
  
\Copyright{Narek Bojikian and Stefan Kratsch}  
\ccsdesc[500]{Theory of computation~Parameterized complexity and exact algorithms}

\keywords{Parameterized complexity, triangle packing, clique packing, clique-width}  
\category{}  
\relatedversion{}

 \EventEditors{John Q. Open and Joan R. Access}
\EventNoEds{2}
\EventLongTitle{42nd Conference on Very Important Topics (CVIT 2016)}
\EventShortTitle{CVIT 2016}
\EventAcronym{CVIT}
\EventYear{2016}
\EventDate{December 24--27, 2016}
\EventLocation{Little Whinging, United Kingdom}
\EventLogo{}
\SeriesVolume{42}
\ArticleNo{23}

\pdfoutput=1 
\hideLIPIcs
\nolinenumbers

\begin{document}
\maketitle

\begin{abstract}
    In the $d$-\ClqPackp problem, given a graph $G$ and an integer $\target$, we need to decide whether $G$ contains a set of $t$ pairwise vertex-disjoint cliques of size $d$ each. This generalizes \textsc{Triangle Packing} and it is NP-complete for all $d\geq 3$. For each such $d$, we show how to solve the problem in $n^{\Oh(k^{d-1})}$ time where $k$ is the clique-width of the graph (with a $k$-expression of $G$ given in the input). We complement this by showing that, assuming the Exponential-Time Hypothesis (ETH), there is no algorithm that solves the problem in $n^{o(k^{d-1})}$ time for any fixed $d\geq 3$, already for the special case of seeking a partition into cliques of size $d$. Our proof also entails $W[1]$-hardness of $d$-\ClqPackp (and $d$-\textsc{Clique Partition}) parameterized by clique-width for each $d\geq 3$. Our work continues a series of results on ETH-tight bounds for fundamental graph problems started by Fomin et al.\ (SICOMP 2010+2014) who obtained tight bounds for \textsc{Max-Cut} and \textsc{Edge Dominating Set}.
\end{abstract}

\section{Introduction}

Parameterized complexity studies the influence of instance properties such as solution size or structural restrictions on problem complexity. To this end, such properties are quantified as parameters and the main goal is to find parameterized algorithms that are (reasonably) fast when said parameters are small. By now, there is an extensive toolbox of algorithmic techniques as well as a variety of structural parameters for dealing with problems that are NP-hard in general. These result in FPT-algorithms (fixed-parameter tractable) with running time $f(k)\cdot n^{\Oh(1)}$ and in (slower) XP-algorithms with running time $n^{f(k)}$, where $n$ is the input size, $k$ is the parameter, and $f$ is any computable function. This is complemented by a well-developed understanding of lower bounds, proving conditional optimality of the dependence on $k$ subject to, e.g., ETH or SETH\footnote{The Exponential-Time Hypothesis (ETH) posits that there is $c>1$ such that no algorithm solves \textsc{$3$-SAT} in $\Oh(c^n)$ time. The Strong Exponential-Time Hypothesis (SETH) posits that for each $c<2$ there is $q\in\mathbb{N}$ such that no algorithm solves \textsc{$q$-SAT} in $\Oh(c^n)$ time.}, or conditionally ruling out FPT- or even XP-algorithms, e.g., conditioned on $P\neq NP$ or $FPT\neq W[1]$.\footnote{The class FPT contains all parameterized problems that have an FPT-algorithm relative to their parameter. The class $W[1]$ contains all problems reducible to \textsc{Clique} parameterized by solution size.}

In this work, we are interested in (conditionally) tight complexity relative to the graph parameter \emph{clique-width}. While clique-width is not as ubiquitous as treewidth, for which there is a wealth of conditionally tight upper and lower bounds, it can also be small for dense graphs.\footnote{If treewidth is bounded then so is clique-width, but clique-width may be exponential in the treewidth. Conversely, there are families of graphs of bounded clique-width but unbounded treewidth. For a graph to have treewidth at most $k$, it cannot have more than $kn$ edges.} Recently, a number of works have given tight bounds under SETH for many fundamental graph problems parameterized by clique-width, most taking the form that an $\alpha^k n^{\Oh(1)}$ time (FPT-)algorithm is known/shown and any $(\alpha-\varepsilon)^k n^{\Oh(1)}$ time algorithm would refute SETH; here $\alpha$ is problem-specific~\cite{DBLP:journals/siamdm/Lampis20,DBLP:conf/esa/HegerfeldK23,DBLP:conf/icalp/BojikianK24,DBLP:journals/talg/GanianHKOS24,DBLP:journals/mst/JaffkeLL24,DBLP:conf/iwpec/BojikianK25,DBLP:journals/corr/abs-2512-01900/BojikianK25}. Earlier, starting with Fomin et al.~\cite{DBLP:journals/siamcomp/FominGLS10,DBLP:journals/siamcomp/FominGLS14}, tight upper and (conditional) lower bounds were obtained, usually with $n^{f(k)}$ upper bound and ruling out $n^{o(f(k))}$ subject to ETH~\cite{DBLP:journals/siamcomp/FominGLS14,DBLP:journals/tcs/BergougnouxK19,DBLP:journals/talg/FominGLSZ19,DBLP:journals/algorithmica/BergougnouxKK20}. In particular, for \textsc{Edge Dominating Set}, \textsc{Max Cut}, and \textsc{Hamiltonian Cycle} there are algorithms running in $n^{\Oh(k)}$ time when given a $k$-expression of the graph, while $n^{o(k)}$ time is ruled out under ETH~\cite{DBLP:journals/siamcomp/FominGLS14,DBLP:journals/talg/FominGLSZ19,DBLP:journals/algorithmica/BergougnouxKK20}. For \textsc{Graph Coloring}, however, there is a $n^{\Oh(2^k)}$ time algorithm by Kobler and Rotics~\cite{DBLP:journals/dam/KoblerR03} and Fomin et al.~\cite{DBLP:journals/talg/FominGLSZ19} ruled out $n^{2^{o(k)}}$ time under ETH.\footnote{This would leave room for, e.g., a $n^{1.5^k}$ time algorithm but double exponential dependence is necessary.}

The unusual dependence of $2^k$ in the exponent of $n$ for \textsc{Graph Coloring} compared to $\Oh(k)$ for the other problems naturally leads to the question of what other (conditionally) tight complexities are open. Presumably, there are artificial problems for any dependence on $k$, but are there also natural ones with e.g.\ a polynomial of $k$ in the exponent? Moreover, outside of the many problems that are MSO expressible and thereby FPT due to a result of Courcelle~\cite{DBLP:journals/mst/CourcelleMR00} our knowledge about the (conditionally) optimality of XP-algorithms for problems parameterized by clique-width is still quite limited. 

With this motivation, we study the $d$-\ClqPackp problem where, given a graph $G$ and an integer $t$, the task is to decide whether there exists a family of at least $\target$ vertex-disjoint cliques of size exactly $d$ each in $G$. We also consider the special case of $d$-\ClqPartp where we seek a partition into cliques of size $d$ each. These are natural generalizations of \textsc{Triangle Packing} and \textsc{Partition into Triangles} and they are NP-complete for each $d\geq 3$. They are also natural stepping stones for considering more general (induced) subgraph packing problems.

\subparagraph{Our results.}

We provide tight bounds for the complexity of $d$-\ClqPackp and $d$-\ClqPartp relative to clique-width for each value $d\geq 3$ (modulo ETH). Our reductions also directly show the $W[1]$-hardness of these problems parameterized by clique-width.
Since $d$-\ClqPackp clearly solvable in polynomial time for $d\leq 2$ (the case $d=2$ is equivalent to the maximum matching), this results in a full classification of the parameterized complexity of these problems with respect to clique-width for each value of $d$. We divide our work into two parts mainly proving the following two theorems.

\begin{theorem}\label{theo:ub}
    For each positive integer $d\geq 3$, the $d$-\ClqPackp problem can be solved in time $n^{\Oh(k^{d-1})}$ when a $k$-clique-expression of the input graph is given. 
\end{theorem}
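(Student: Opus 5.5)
We solve the problem by dynamic programming over the given $k$-expression, processing its parse tree bottom-up. The key observation is that within the subgraph $G_x$ built at a node $x$ of the expression, a partial solution consists of complete $d$-cliques together with \emph{partial cliques}: vertex sets of size between $1$ and $d-1$ that are cliques in $G_x$ and will be completed to $d$-cliques by later operations. Edges created later by a join $\eta_{i,j}$ depend only on labels, and relabelings merge labels uniformly. So a partial clique matters for the future only through its \emph{label profile}: the multiset of labels of its vertices, a multiset over $[k]$ of size at most $d-1$. Two partial cliques with the same profile behave identically for all later operations. The number of distinct profiles is $p=\sum_{s=1}^{d-1}\binom{k+s-1}{s}=\Oh(k^{d-1})$ for fixed $d$.

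The DP state at node $x$ is a vector $c\in\{0,\dots,n\}^{p}$ recording, for each profile, how many partial cliques of that profile are present. Each entry stores the maximum number of complete $d$-cliques achievable, or $-\infty$ if the state is infeasible. There are at most $(n+1)^{p}=n^{\Oh(k^{d-1})}$ states per node.

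\textbf{Transitions.}
\begin{itemize}
\item \emph{Introduce} $i(v)$: either $v$ is left out of the packing (zero vector), or $v$ starts a partial clique of profile $\{i\}$. (For $d=1$ it would complete immediately, but $d\ge 3$.)
\item \emph{Relabel} $\rho_{i\to j}$: map each profile to its image under the relabeling and add up the counts of profiles that collide.
\item \emph{Disjoint union}: combine every pair of child states by adding the vectors coordinatewise and summing the values. This costs $(n^{p})^2$, which is still $n^{\Oh(k^{d-1})}$.
\item \emph{Join} $\eta_{i,j}$: this is the main obstacle. After the join, several partial cliques may merge into a larger clique. A set of partial cliques $Q_1,\dots,Q_r$ forms a clique after the join exactly when the following hold: at the moment before the join, any two vertices in different $Q_a$ are nonadjacent (they lie in disjoint parts of earlier unions or were never joined), so they become adjacent only if one has label $i$ and the other label $j$. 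Hence merging is possible only for $r\le 2$. Moreover, two partial cliques can merge only if one has all its labels equal to $i$ and the other all its labels equal to $j$, with the total size at most $d$.
\end{itemize}

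One caveat needs checking. Two partial cliques might already have adjacent vertices from an earlier join. For this reason I would restrict the solution structure so that each partial clique is merged with another only at a join. I would then argue that every $d$-clique of the final graph arises by a sequence of such pairwise merges, each happening at a join node that covers all the cross edges between the two parts. This follows by considering the lowest union node separating the clique's vertices and noting that all cross edges are created by joins above it. To be safe, I would instead allow a join to merge \emph{groups}: the join transition guesses, for each pair of sizes $(a,b)$ with $a+b\le d$, how many partial cliques of profile $\{i^a\}$ are paired with ones of profile $\{j^b\}$. That is at most $n^{\Oh(d^2)}$ choices, applied as one vector update; completed cliques of size $d$ are added to the value.

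Correctness then rests on the invariant that the stored value equals the optimum over partial solutions in $G_x$ with the given profile counts. Its proof mainly needs the merge-at-lowest-common-union argument above. At the root, the algorithm accepts iff the zero-vector state has value at least $t$. The total running time is $|\text{expression}|\cdot n^{\Oh(k^{d-1})}$. The profile-counting bound and the verification that merges arise only from joins between single-label partial cliques are the delicate steps.
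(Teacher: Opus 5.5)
Your state space is the right one and matches the paper's: multisets of at most $d-1$ labels, with completed $d$-cliques only counted. The join transition fails, though, exactly at the caveat you flagged. Both your $r\le 2$ bound and your single-label merge rule rest on the premise that vertices of different partial cliques are nonadjacent just before a join $\eta_{i,j}$. That premise is false for clique-expressions, because the cross edges of one clique can be created by several different joins. A profile vector also cannot tell you which pieces already have edges between them: labels determine future edges, not past ones, in particular after relabelings.

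Concretely, let $d=3$, $t=1$, and let the triangle on $u,v,w$ be given by the $3$-expression
\[
\eta_{1,3}\bigl(\eta_{1,2}\bigl(1(u)\oplus\eta_{2,3}(2(v)\oplus 3(w))\bigr)\bigr).
\]
There is no relabeling, so every two-vertex piece carries two distinct labels. Your rule (merge profiles $\{i^a\}$ and $\{j^b\}$ at $\eta_{i,j}$) can therefore never create a three-vertex piece, and your algorithm answers NO. The repair via the lowest separating union node does not help. That node is $1(u)\oplus(\cdots)$, and the cross edges $uv$ and $uw$ are created by two different joins above it, so no single join covers all cross edges. Allowing group merges of single-label profiles changes nothing.

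The missing idea is to merge pieces at the moment the two sides meet, using their eventual adjacency. Take a union node $x$ with children $x_1,x_2$. Whether $u\in V_{x_1}$ and $v\in V_{x_2}$ are adjacent in the final graph depends only on their labels at $x$. This is because all later operations treat labels uniformly and only joins above $x$ can create the edge $uv$. Hence a relation $\alpha_x\subseteq[k]^2$, computable from the expression, describes all cross edges at $x$. The fix is then:
\begin{itemize}
\item define partial cliques as cliques of the final graph;
\item at $x$, merge a left piece of profile $X$ with a right piece of profile $Y$ iff every label of $X$ is $\alpha_x$-related to every label of $Y$ and $|X|+|Y|\le d$;
\item let joins be no-ops in the DP.
\end{itemize}

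This is exactly what the paper achieves by first converting the $k$-expression into a $k$-NLC-expression. In $G_1\oplus_\alpha^\beta G_2$ all cross edges are created at once, and the right side's labels are shifted by $k$ so that left and right pieces stay distinguishable. For each feasible pair $(X_i,Y_i)$ of profiles, the algorithm merges some number of left pieces with right pieces and moves those that reach size $d$ into the aggregate counter. With that change, the rest of your plan goes through: the introduce and relabel transitions, coordinatewise addition as the first step at a union, the $O(k^{d-1})$ profile count, and the $n^{O(k^{d-1})}$ running time.
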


In the latter part, we show that for each value $d\geq 3$ even the $d$-\ClqPartp problem cannot be solved in time $n^{o(k^{d-1})}$ under ETH, even when the input graph is provided with a linear $k$-expression. 

\begin{theorem}\label{theo:lb}
    For any integer $d\geq 3$, the $d$-\ClqPartp problem is $W[1]$-hard, and cannot be solved in time $n^{o(k^{d-1})}$ assuming ETH, even when the input graph is given together with a linear $k$-expression.
\end{theorem}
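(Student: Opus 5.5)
We reduce from \textsc{Multicolored Clique} (or \textsc{$k$-Clique}) on $n$-vertex graphs. ETH rules out $f(k)\,n^{o(k)}$ for this problem, and it is $W[1]$-hard parameterized by $k$. The goal is an instance of $d$-\ClqPartp with a linear $\Oh(k^{1/(d-1)})$-expression and size polynomial in $n$ (times $f(k)$). Since $(k^{1/(d-1)})^{d-1}=k$, an algorithm for $d$-\ClqPartp running in $N^{o(\mathrm{cw}^{d-1})}$ time would then give $n^{o(k)}$ for \textsc{Clique}. The reduction is a parameterized reduction, so $W[1]$-hardness comes for free.

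The upper-bound intuition explains why the exponent is $k^{d-1}$. A partial solution is described by how many partial cliques exist of each ``type'', where a type is a multiset of at most $d-1$ labels: the labels of the vertices already placed in a clique that is still waiting for more vertices. This gives $\Theta(k^{d-1})$ counters, each ranging up to $n$. The lower bound must exploit exactly this. We fix $\ell=\Theta(k^{1/(d-1)})$ labels and identify each of the $k$ colour classes (or each of the roughly $\binom{k}{2}$ pairs, after rescaling) with a distinct $(d-1)$-tuple of labels $(a_1,\dots,a_{d-1})\in[\ell]^{d-1}$. For each such tuple we create ``partial cliques'': groups of $d-1$ pairwise adjacent vertices carrying the labels $a_1,\dots,a_{d-1}$. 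Each group must later be completed by exactly one vertex adjacent to all of them, and that vertex can only be introduced by joins between label sets. The number of partial cliques of a given type that remain open encodes an integer in $[n]$, namely the chosen vertex of that colour class.

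Next we build the linear expression in three phases. First, a selection gadget for class $i$ creates $N$ partial cliques of type $\tau_i$. Some of them are forced to be closed locally, by private vertices or small gadgets, in a number $x_i\in[n]$ of our choice, so $N-x_i$ of type $\tau_i$ stay open. Second, for every edge-check between classes $i$ and $j$ we need to verify that $(x_i,x_j)$ is an edge. Here we use the standard ``counting'' trick of Fomin et al.\ for \textsc{Max-Cut}/\textsc{Edge Dominating Set}: the number of open cliques is read off by gadgets that consume a prescribed amount. The values are encoded in complementary form, e.g.\ $x$ and $M-x$, so that consistency is forced by exact totals. Third, after all checks every vertex must be covered, which is the partition requirement. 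Labels are recycled by renaming finished vertices to a dead label, which keeps the expression linear with $\Oh(\ell)$ labels.

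The main obstacle is soundness. Joins act on whole label classes, so a completing vertex joined to labels $a_1,\dots,a_{d-1}$ is adjacent to every vertex of those labels. It could therefore complete a partial clique of a different type that shares labels, or complete a clique from vertices that are not a designated group. This spurious mixing is exactly why naive constructions only reach $k^{d-2}$ or $k$. I would handle it in two ways. First, give each designated group internal edges only within the group, so that only designated groups are cliques. Second, choose the tuples so that the label set of any completing vertex contains exactly one designated type's full label set. One way to do this is to use distinct labels per coordinate, taking $[\ell]^{d-1}$ over $d-1$ disjoint alphabets. Together with a weighting or counting argument, namely that the total vertex count forces every gadget to be used exactly as intended, this gives the backward direction. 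The forward direction, building a partition from a clique, and the bound on the width are routine bookkeeping.
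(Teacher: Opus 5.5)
Your label encoding is the paper's. A colour class becomes a $(d-1)$-set of labels: the paper uses $\binom{[k]}{d-1}$, and your $d-1$ disjoint alphabets work equally well. The designated $(d-1)$-cliques are pairwise non-adjacent and the completing vertices form independent sets, so a vertex joined to the labels of type $C$ can only complete a designated clique of type exactly $C$. This is precisely the paper's argument that $\hat G$ and $G$ have the same $d$-cliques.

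The gap is in what you call routine, and the first missing piece is value propagation. In a partition every open $(d-1)$-clique is completed exactly once, so a pool of $N-x_i$ open cliques of type $\tau_i$ can be read by only one check. If several of the $k-1$ checks of class $i$ draw from the same pool, exact totals fix only the sum of what they take, not that each check sees $x_i$. A separate type per (class, check) pair would square the number of types and lose tightness. So the same labels must be reused over time, and you need a gadget forcing consecutive uses to carry the same value.

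The paper's colour gadget does exactly this. It is a cyclic alternation of columns of $n$ disjoint $(d-1)$-cliques and independent cuts of size $n$, with every cut vertex adjacent to both neighbouring columns. Each cut vertex takes exactly one clique. So if the inner cut takes $s$ cliques of the left column, it takes $n-s$ of the right one. The outer cut must then take the remaining $s$ there, hence $n-s$ of the next left column, leaving $s$ for the next inner cut (Lemma~\ref{lem:lb-same-state}). This two-sided design is also what makes your ``closed locally in a number of our choice'' work: in a partition, the unused local closers must be covered too.

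The second missing piece is the adjacency check. Invoking ``the counting trick of Fomin et al.'' does not give you one for clique partition. You must select one of the $m$ edges between $V_{C_1}$ and $V_{C_2}$ while covering every vertex of the other $m-1$ edge representatives by $d$-cliques. The selected representative must take exactly $j_1$ left and $n-j_1$ right cliques on the $C_1$ side, and $j_2$ and $n-j_2$ on the $C_2$ side.

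The paper does this in two steps. It replaces the inner cuts by simple edge gadgets with exactly this adjacency pattern. Between consecutive representatives $X_i,X_{i+1}$ it places an even cycle, which must be covered along one of its two perfect matchings. Edges of one matching form triangles only with $X_i$, those of the other only with $X_{i+1}$, and attached $(d-3)$-cliques lift these triangles to $d$-cliques. With $m-1$ cycles and $m$ sets, exactly one set stays exposed (Lemma~\ref{lem:lb-selgad}). The redundant edges to the four residue classes of the cycle keep the width constant.

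Without these two gadgets, your backward direction (``the total vertex count forces every gadget to be used exactly as intended'') is an assertion rather than an argument. The forward direction and the linear-width bound also depend on these gadgets, so they are not routine bookkeeping either.
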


\subparagraph{Related work.}
Fürer proposed two generalizations of clique-width, called fusion width~\cite{DBLP:conf/latin/Furer14} and multi-clique-width~\cite{DBLP:conf/innovations/Furer17}, that in particular avoid the exponential blow-up compared to treewidth. Chekan and Kratsch~\cite{DBLP:conf/mfcs/ChekanK23} showed that several tight bounds carry over from clique-width to both new variants. 

Bergougnoux et al.~\cite{DBLP:conf/stacs/BergougnouxKN23} showed that assuming ETH there is no $2^{o(k^2)} \cdot n^{\Oh(1)}$ time algorithm for \textsc{Independent Set} parameterized by rank-width, an even more general parameter. It is known that rank-width is upper-bounded by clique-width, but \textsc{Independent Set} parameterized by clique-width can be solved in $2^k\cdot n^{\Oh(1)}$ time, which is optimal under SETH~\cite{DBLP:conf/esa/IwataY15,DBLP:journals/talg/LokshtanovMS18}.

\subparagraph{Organization.}
In Section~\ref{section:preliminaries} we give a formal definition of clique-width along with a useful variant called \emph{NLC-width}. Section~\ref{section:upperbound} presents our algorithm and proves Theorem~\ref{theo:ub}. In Section~\ref{section:lowerbound} we prove Theorem~\ref{theo:lb} by giving a reduction from \textsc{Multi-Colored Clique}. We conclude in Section~\ref{section:conclusion}.

\section{Preliminaries}\label{section:preliminaries}

For a positive integer $d$, a \emph{$d$-clique} in a graph $G$ is a subset of vertices of $G$ of size $d$ that induces a complete subgraph. A \emph{$d$-clique packing} in $G$ is a family of vertex-disjoint $d$-cliques in $G$.
We define the $d$-\ClqPackp problem as follows:

\problemdef{\ClqPackp}{
A graph $G=(V,E)$ and an integer $t\in \mathbb{N}$.}{Is there a $d$-clique packing of size at least $t$ in $G$?}

The $d$-\ClqPartp problem is defined as the special case with $|V| = td$.

A \emph{reduction} $f$ between two parameterized problems $A$ and $B$ is a mapping that maps each instance $(I_A, k_A)$ of $A$ to an instance $(I_B, k_B)$ of $B$ such that $f$ is computable, and $(I_A, k_A) \in A$ if and only if $(I_B, k_B) \in B$. We say that $f$ is a \emph{parameterized reduction} if it can be computed in FPT time and it holds that the size of the output parameter is bounded by some computable function of the input parameter. It is well-known that the class $W[1]$ is closed under parameterized reductions. We refer to \cite[Chapter 13]{DBLP:books/sp/CyganFKLMPPS15} for details.

\subparagraph{Clique Expression}

A \emph{labeled graph} is a graph $G=(V, E)$ together with a \emph{labeling function} $\lab\colon V\rightarrow \mathbb{N}$. We usually omit the function $\lab$ and assume that it is implicitly given with $G$. We say that $G$ is \emph{$k$-labeled}, if it holds that $\lab(v)\leq k$ for all $v\in V$. A \emph{labeled forest} is a labeled graph that is a forest.
We define a \emph{clique expression} $\mu$ as a well-formed expression defined by the following operations on labeled graphs:
\begin{itemize}
    \item \emph{Introduce vertex} $i(v)$ for $i\in\mathbb{N}^+$. This operation constructs a graph containing a single vertex $v$ labeled $i$.
    \item The \emph{relabel} operation $\clqrel_{i,j}(G)$ for $i,j \in \Nats^+$, $i\neq j$. This operation changes the labels of all vertices in $G$ labeled $i$ to the label $j$.
    \item The \emph{join} operation $\clqjoin_{i,j} (G)$ for $i,j\in\Nats^+, i\neq j$. The constructed graph results from $G$ by adding all edges between the vertices labeled $i$ and the vertices labeled $j$.
    \item The \emph{union} operation $G_1 \clqunion G_2$. The resulting graph is the disjoint union of $G_1$ and $G_2$.
\end{itemize}

We denote the graph resulting from a clique expression $\mu$ by $G_{\mu}$, and the constructed labeling function by $\lab_{\mu}$. We associate with a clique expression $\mu$ a syntax tree $\syntaxtree_{\mu}$ (we omit $\mu$ when clear from context) in the natural way, and associate with each node $x\in V(\syntaxtree)$ the corresponding operation. For $x\in V(\syntaxtree)$, the subtree $\syntaxtree_x$ rooted at $x$ induces a subexpression $\mu_x$. We define $G_x = G_{\mu_x}$, $V_x = V(G_x)$, $E_x=E(G_x)$ and $\lab_x = \lab_{\mu_x}$.

We say that a clique expression $\mu$ is a \emph{$k$-expression} if $G_x$ is a $k$-labeled graph for all $x\in V(\syntaxtree)$. We define the clique-width of a graph $G$ (denoted by $\cw(G)$) as the smallest value $k$ such that there exists a $k$-expression $\mu$ with $G_{\mu}$ isomorphic to $G$.
A \emph{linear $k$-expression} $\mu$ is a $k$-expression, such that $\syntaxtree_{\mu}$ is a caterpillar, i.e.\ each union node has at least one child that corresponds to an introduce vertex operation.

\subparagraph{NLC-Width.}
A closely related parameter to clique-width is NLC-width, where an NLC-expression is defined solely by the two following operations:
The \emph{introduce operation} $i(v)$, resulting in a graph consisting of a single vertex $v$ labeled $i$. The second operation is called the \emph{join operation} $\oplus_\alpha^\beta$ for an \emph{edge mapping} $\alpha\subseteq [k]^2$ and a \emph{relabeling function} $\beta: [k]\rightarrow [k]$.
Given two $k$-graphs $G_1$ and $G_2$, $G_1 \oplus_\alpha^\beta G_2$ is the $k$-graph obtained by performing the following steps in order:
\begin{enumerate}
\item First, we construct the graph $G^1$, the disjoin union of $G_1$ and $G_2$.
\item After that we add to $G^1$ all edges between vertices $v_1$ of $G_1$ with label $i_1$ and vertices $v_2$ of $G_2$ of label $i_2$ such that $(i_1,i_2)\in\alpha$ resulting in the graph $G^2$.
\item Finally, for each vertex $v$ of $G^2$, we set the label of $v$ to $\beta(\lab(v))$ resulting in the graph $G_1 \oplus_\alpha^\beta G_2$.
\end{enumerate}

The \emph{NLC-width} of a graph $G$ (denoted $\nlcw(G)$) is the smallest value $k$ such that there exists an $k$-NLC-expression generating a graph isomorphic to $G$. 
We accompany each NLC-expression with a syntax tree $\syntaxtree$ in a natural way, where each leaf corresponds to an introduce operation and each internal node corresponds to a join operation. We  denote the nodes of $\syntaxtree$ by $\nodes=V(\syntaxtree)$. For $x\in\nodes$, we denote by $\syntaxtree_x$ the subtree of $\syntaxtree$ rooted at $x$, and by $G_x=(V_x, E_x)$ the graph obtained by the subexpression corresponding to the $\syntaxtree_x$.

It is well-known~\cite{johansson1998clique,BodlaenderH12} that it holds for every graph $G$ that
\[\nlcw(G)\leq \cw(G)\leq 2\cdot \nlcw(G).\]
Moreover, this relationship is constructive and so provided a $k$-clique-expression of a graph $G$ of width $k$, a $k$-NLC-expression of $G$ can be computed in polynomial time.

\section{Upper Bound}\label{section:upperbound}

We build our algorithm over NLC-expressions rather than clique-expressions, since the former allow for a simpler presentation. We can do this without loss of generality, since an NLC-expression of width at most $k$ can be computed in polynomial time from a given $k$-clique-expression.
Along this upper bound, let us fix a constant integer $d\geq 3$, and let $(G=(V,E),\target)$ be the given instance of the $d$-clique packing problem, where $G$ is given together with an NLC-expression of width $k$. Let $\syntaxtree$ be the syntax tree of the given NLC-expression, and let $\nodes \coloneq V(\syntaxtree)$ be its nodes.

Intuitively, a partial solution at a node $x$ is defined as a partition of the vertices in $G_x$ into cliques of size \emph{at most} $d$, where a solution is a partition at the root node with at least $\target$ cliques of 
size \emph{exactly} $d$.
We will define fingerprints corresponding to different partial solutions, that will allow us to count solutions in the claimed running time. Crucially, a clique can be represented by the set of its labels, where we have $\Oh(k^d)$ different such sets. A fingerprint of a partial solution then counts the number of times each such set of labels appears as a clique in this partial solution.

At first glance, this implies an algorithm with running time at least $n^{\Omega(k^d)}$ since this is the number of different fingerprints we keep track of. However, our main observation in the upper bound is that, while keeping the sets of labels in a fingerprint is essential to extend cliques correctly along the expression---as it allows us to combine two smaller cliques in a partial solution correctly---once a clique hits the target size $d$, we can ``forget'' its set of labels, and keep an aggregate counter for all such cliques independent of their labels. This reduces the running time to $n^{\Oh(k^{d-1})}$, since each finger print can now be defined as a mapping from $\Oh(k^{d-1})$ different sets of labels to some integer bounded by $n$, with a single added integer that represents the total count of $d$-cliques. In the following, we formalize this idea and present the algorithm.
We start with some notation.

\begin{definition}
Given a vertex set $S$ of a labeled graph $G$, we define the \emph{type} of $S$, denoted $\lab^M(S)$, as the multiset of labels of the vertices in $S$.
\end{definition}

Let $\mathcal{M}_d$ be the family of all multisets of size at most $d-1$ over the set $[k]$.
We define the family of states $\states \coloneq [n]_0^{\mathcal{M}_d}$ of all mappings that assign an integer in $[n]_0$ to each multiset in $\mathcal{M}_d$. We use this family to index the tables of our dynamic programming algorithm.

\begin{definition}
    Given a labeled graph $G$, a \emph{partial solution} in $G$ is a partitioning of the vertices in the subgraph $G_x$ into cliques of size at most $d$. Given a partial solution $P$, we define the \emph{state} of this partitioning as the mapping $\statef(P)\in\states$ where for each $M\in \mathcal{M}_d$, $\statef(M)$ is the number of cliques $C$ of $P$ with $\lab^M(C)=M$. We define the \emph{weight} of the partitioning as the number of cliques in $P$ of size exactly $d$.
\end{definition}

Now we define the tables $A^t_x$ which indicate the existence of a partial solution of a specific state. We show that our dynamic programming routine computes these tables exactly.

\begin{definition}
We define the tables $A^t_x \in \{0,1\}^{\states}$ for each node $x \in \nodes$ and value $t \in [n]_0$, with $A^{t}_x[s]=1$ if and only if there exists a partial solution in $G_x$ of weight $t$ and state $s$.
\end{definition}

The following observation holds since in a solution $S$, any vertex of $G$ not covered by $S$ can be covered by a clique of size one in a partition $P$.

\begin{observation}\label{obs:ub-A-sol}
    The graph $G$ admits a $d$-clique packing of size $\target$ if and only if there exists a state $s\in \states$ such that $A_r^{\target}[s] = 1$.
\end{observation}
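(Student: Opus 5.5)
The plan is to prove the two directions of the equivalence directly from the definitions of partial solution, state and weight at the root node $r$, where $G_r = G$.

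For the forward direction, suppose $G$ admits a $d$-clique packing $\mathcal{S}=\{C_1,\dots,C_\target\}$ of size $\target$. If the packing has more than $\target$ cliques we first discard all but $\target$ of them; a subfamily of a packing is still a packing. We then build the partition
\[ P \coloneq \mathcal{S} \cup \bigl\{\{v\} : v\in V\setminus \textstyle\bigcup_{i}C_i\bigr\}. \]
We check three things:
\begin{itemize}
\item $P$ is a partition of $V=V_r$, since the $C_i$ are pairwise disjoint and the singletons cover exactly the remaining vertices.
\item Every part is a clique of size at most $d$; a singleton trivially induces a complete subgraph, and $d\ge 3\ge 1$.
\item Exactly the $\target$ parts $C_i$ have size $d$, because the singletons have size $1<d$.
\end{itemize}
Hence $P$ is a partial solution in $G_r$ of weight $\target$. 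Its state $s=\statef(P)$ counts, for each multiset in $\mathcal{M}_d$, the cliques with that type. Each such count is at most the number of parts, which is at most $n$, so $s\in\states=[n]_0^{\mathcal{M}_d}$. By the definition of the tables this gives $A_r^{\target}[s]=1$.

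For the backward direction, suppose $A_r^{\target}[s]=1$ for some $s\in\states$. Then there is a partial solution $P$ in $G_r=G$ of weight $\target$, meaning a partition of $V$ into cliques of size at most $d$ of which exactly $\target$ have size exactly $d$. Those $\target$ parts are pairwise vertex-disjoint $d$-cliques of $G$, because they are distinct parts of a partition. They therefore form a $d$-clique packing of size $\target$.

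The argument is essentially routine. The only points needing care are that the state of the constructed partition really lies in $\states$, which holds because counts are bounded by $n$, and that a packing of size at least $\target$ can be trimmed to size exactly $\target$. The latter matters because weight is defined as an exact count while the problem asks for at least $\target$ cliques.
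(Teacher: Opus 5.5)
Your proof is correct and follows the paper's approach, which the paper gives only as a one-line remark: pad the packing with singleton cliques to obtain a partition of weight $t$ at the root, and conversely take the weight-$t$ parts of such a partition as the packing. Your extra checks, that the resulting state lies in $[n]_0^{\mathcal{M}_d}$ and that a larger packing can be trimmed to exactly $t$ cliques, are details the paper leaves implicit.
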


Now we aim to describe a dynamic programming routine that computes exactly the tables $A^t_x$ for each node $x\in \nodes$ and each value $t\in [n]_0$.
When handling a join node, it is more convenient to assign distinct labels to the vertices of the two unified graphs. This extends our definition of states.
Let $\mathcal{M}^{2k}_d$ be the family of all multisets of size at most $d-1$ over the set $[2k]$. Let $\states^{2k}$ be the family of all extensions of the mappings in $\states$ to $\mathcal{M}^{2k}_d$. 

Before we present the algorithm, we define the following operation that describes how a state of a partial solution changes after relabeling the vertices in an NLC-expression. We also use this operation to shift the labels in a state corresponding to the right child of a join node before combining it with the state corresponding to the left child, achieving the distinction of labels mentioned above.

\begin{definition}
    Given a multiset $M\in\mathcal{M}^{2k}$ and two integers $i,j\in [2k]$, we define the multiset $M_{i\rightarrow j}$ obtained from $M$ by replacing each occurrence of $i$ in $M$ with $j$, i.e.
    \[
    \#_{\ell}(M_{i\rightarrow j}) = \begin{cases}
    0 & ;\ell = i,\\
    \#_{j}(M) + \#_{i}(M) & ;\ell = j,\\
    \#_{\ell}(M) & ;\text{otherwise.}
    \end{cases}
    \]

    Given a mapping $s\in \states^{2k}$, we define the mapping $s_{i\rightarrow j}$ as 
    $s_{i\rightarrow j}(M) = \sumstack{M'\in\mathcal{M}_d\\(M')_ {i\rightarrow j} = M} s(M')$.
    Finally, for some integer $k'$, a relabeling function $\gamma\colon [k']\rightarrow [k']$ and a multiset $M\in \mathcal{M}^{k'}_d$, we define the multiset $M_{\rightarrow \gamma} \in \mathcal{M}^{k'}_d$ by replacing each occurrence of a label $\ell$ in $M$ with $\gamma(\ell)$. We define the mapping $s_{\rightarrow \gamma}$ as
    \[
        s_{\rightarrow \gamma}(M) = \sumstack{M'\in \mathcal{M}^{2k}_d\\ (M')_{\rightarrow \gamma}=M} s(M').
    \]
\end{definition}
 
For $s\in\states$, let $s\downop$ be the extension of $s$ to $\mathcal{M}^{2k}_d$ with $s\downop(M) = 0$ for each $M\in \mathcal{M}^{2k}_d\setminus \mathcal{M}_d$. Let $s\upop = (s\downop)_{\rightarrow \gamma}$ for the relabeling $\gamma\colon [2k]\rightarrow [2k]$ defined by $\gamma(i)=i+k$ for $i\in [k]$ and $\gamma(i)=i$ for $i\in \{k+1,\dots,2k\}$. Intuitively, $s\downop$ is the extension of $s$ that preserves the labels, whereas $s\upop$ is the extension of $s$ that shifts all labels by $k$.

\begin{definition}
    Given two mappings $s_1, s_2 \in \states$, we define the \emph{shifted union} of $s_1$ and $s_2$, where for $M\in \mathcal{M}^{2k}$ it holds that
    \[
    (s_1\punion s_2)(M) = s_1\downop(M) + s_2\upop(M).
    \]
\end{definition}

Note that $s_1\punion s_2(M) = 0$ whenever $M$ contains an element in $[k]$ and an element in $[2k]\setminus [k]$. Intuitively, this operation combines two states by shifting the labels of the latter by $k$. This allows unified handling of join nodes in a step-by-step manner.

Finally, let $(X_1, Y_1) \dots (X_h, Y_h)$ be an enumeration of all sets in $\mathcal{M}^{2k} \times \mathcal{M}^{2k}$ in an arbitrary ordering. The following notion of feasibility indicates when two cliques in a partial solution can be merged by a join operation.

\begin{definition}
Given a node $x\in \nodes$ corresponding to an operation $\oplus_{\alpha}^{\beta}$, and an integer $i\in[h]$, we call the pair $(X_i, Y_i)$ \emph{feasible} at $x$, if it holds that $|X_i\cup Y_i|\leq d$ and for each label $\ell\in X_i$ and each label $\ell'\in Y_i$, it holds that $(\ell,\ell')\in\alpha$. 
\end{definition}

\begin{algorithm}\label{ub-alg}
We define the tables $T^{\omega}_x \in \{0,1\}^{\states}$ for each node $x \in \nodes$ and each value $\omega \in [n]_0$ recursively over $\syntaxtree$ where for an introduce vertex node $\mu_x = i(v)$: we define $T^{\omega}_x[s]=1$ 
for the mapping $s$ with $s(\langle i\rangle)=1$ and $s(M)=0$ for all other multisets $M\in \mathcal{M}_d$, and $T^{\omega}_x[s']=0$ for all other mappings $s' \in \states$ different from $s$.

For a join node $x$ corresponding to the operation $\oplus_\alpha^\beta$, we start by defining the auxiliary tables $T_i^{\omega}\in \{0,1\}^{\states^{2k}}$ for each value $i\in[h]_0$ and each $\omega\in [n]_0$.
First, we define the tables  $T_0^{\omega}$ as follows:
\[
T_0^{\omega}[s] = \orstack{\omega_1+\omega_2=\omega}\orstack{s_1, s_2\in \states\\ s_1 \punion s_2 = s} T_{x_1}^{\omega_1}[s1] \land T_{x_2}^{\omega_2}[s_2],
\]
where $T_0^{\omega}[s]=1$ if there exists two values $\omega_1, \omega_2\in [n]_0$ with $\omega_1+\omega_2=\omega$ and two states $s_1, s_2\in \states$ with $s_1\punion s_2 = s$ such that $T_{x_1}^{\omega_1}[s_1] = 1$ and $T_{x_2}^{\omega_2}[s_2] = 1$, and $T_0^{\omega}[s]=0$ otherwise.

For $i\in[h]$, we compute  the tables $T^{\omega}_i$ for all $\omega\in[n]_0$ as follows:
If $(X_i, Y_i)$ is not feasible at $x$, then we set $T_i = T_{i-1}$. 
Otherwise, we start by setting $T^{\omega}_i[s] = 0$ for all values of $\omega$ and all states $s\in\states^{2k}$. Then for each state $s\in \states^{2k}$ let $r = \min\{s(X_i), s(Y_i)\}$. For each value $r'\in[r]_0$, let $s'$ be the state resulting from $s$ by subtracting $r'$ from $s(X_i)$ and $s(Y_i)$, and adding $r'$ to $s(X_i\cup Y_i)$ if $|X_i\cup Y_i|\leq d-1$. 
Then we set $T^{\omega}_i[s'] = T^{\omega}_i[s'] \lor T^{\omega}_{i-1}[s']$ if $|X_i \cup Y_i| \leq d_1$, or we set 
$T^{\omega+r'}_i[s'] = T^{\omega}_i[s'] \lor T^{\omega}_{i-1}[s']$ otherwise.

Let $\gamma^R\colon [2k]\rightarrow [2k]$ be the relabeling function defined by $\gamma^R(i)=\gamma^R(i+k)=i$ for each $i\in [k]$.
we define the auxiliary table $\hat{T}^{\omega}\in \{0,1\}^{\states^{2k}}$ where for $s\in\states^{2k}$ we define
\[
\hat{T}^{\omega}[s] = \orstack{s'\in\states^{2k}\\ (s')_{\rightarrow \gamma^R} = s} T^{\omega}_h[s'].
\]
Let $\hat{T}_0^{\omega}$ be the restriction of $\hat{T}^{\omega}$ to $\states$ with $\hat{T}_0^{\omega}[s] = \hat{T}^{\omega}[s\downop]$ for each $s\in \states$.
We define 
\[
T_x^{\omega}[s] = \orstack{s'\in\states\\ (s')_{\rightarrow \beta} = s} \hat{T}_0^{\omega}[s'].
\]
\end{algorithm}

Now we prove the correctness of the algorithm,
where we show by induction over $\syntaxtree$, that $T^{\omega}_x = A^{\omega}_x$ for each node $x\in \nodes$ and each value $\omega \in [n]_0$. In order to show this, we define the tables $\hat{A}^{\omega}$, and $\hat{A}_i^{\omega}$ for each value $i$ corresponding to the tables $\hat{T}^{\omega}$ and $T_i^{\omega}$ respectively, and show their equality, again by induction over $i$.
We get the following lemma.

\begin{lemma}\label{lem:ub-correct}
    It holds for each node $x\in \nodes$ and each value $\omega \in [n]_0$ that $T^{\omega}_x = A^{\omega}_x$.
\end{lemma}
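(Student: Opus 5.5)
We argue by induction over $\syntaxtree$, bottom-up. For an introduce node $\mu_x = i(v)$ the only partial solution is the single clique $\{v\}$, which has weight $0$ and state $s$ with $s(\langle i\rangle)=1$. This matches the definition of $T^{\omega}_x$, with the understanding that only $\omega=0$ is set; we read the base case as $T^0_x[s]=1$ and $T^{\omega}_x\equiv 0$ for $\omega\ge 1$. For a join node $x$ with children $x_1,x_2$ and operation $\oplus_\alpha^\beta$, we assume $T^{\omega}_{x_j}=A^{\omega}_{x_j}$ for $j=1,2$ and all $\omega$.

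We work in the intermediate graph $G^2$ with vertices of $G_{x_2}$ relabeled by $+k$, so labels lie in $[2k]$. For $i\in[h]_0$ we define $\hat A_i^{\omega}\in\{0,1\}^{\states^{2k}}$ by setting $\hat A_i^{\omega}[s]=1$ iff there is a partition $P$ of $V_x$ into cliques of $G^2$ of size at most $d$ with weight $\omega$ and state $s$ (over $\mathcal{M}^{2k}_d$) such that:
\begin{itemize}
\item every clique of $P$ is either contained in one side, or is $C_1\cup C_2$ with $C_j$ a clique of $G_{x_j}$ whose shifted type pair $(\lab^M(C_1),\lab^M(C_2))$ equals $(X_{i'},Y_{i'})$ for some feasible $i'\le i$, and
\item every clique of $P$ restricted to each side is a clique of the corresponding child.
\end{itemize}
Note that in a join every clique of $G^2$ has this form: edges inside $G_{x_j}$ are old, and a crossing edge between labels $\ell,\ell'+k$ exists iff $(\ell,\ell')\in\alpha$. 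So, reading the feasibility condition with the right-hand labels shifted back, the pair is feasible exactly when $C_1\cup C_2$ is a clique.

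We then show $T_i^{\omega}=\hat A_i^{\omega}$ by induction on $i$. For $i=0$ this is the induction hypothesis on the children, since the shifted union of the states of $P_1$ and $P_2$ is the state of $P_1\cup P_2$ and weights add. For the step with feasible $(X_i,Y_i)$, a solution for $i$ arises from a solution for $i-1$ by merging some number $r'$ of disjoint pairs consisting of one clique of type $X_i$ and one of type $Y_i$. Conversely, undoing the type-$i$ merges of a solution for $i$ gives one for $i-1$. The state update is exactly subtracting $r'$ from $s(X_i)$ and $s(Y_i)$, and then either adding $r'$ to $s(X_i\cup Y_i)$ if the size is at most $d-1$, or adding $r'$ to the weight if the size is exactly $d$. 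We must also check that merges of different type pairs are independent. This holds because the merged cliques have types with labels on both sides, so they are never again of a type $X_{i''}$ or $Y_{i''}$ that is fully one-sided. We therefore restrict the enumeration to pairs with $X\subseteq[k]$ and $Y\subseteq\{k+1,\dots,2k\}$ as multisets; other pairs never have nonzero counts to merge, since the shifted union vanishes on mixed types.

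Finally, $\hat A_h^{\omega}$ ranges over all partial solutions of $G^2$, so applying $\gamma^R$ and then $\beta$ to states transforms the state correctly, since states are additive over cliques and relabeling acts clique-wise. Restricting to $\states$ loses nothing after $\gamma^R$. This gives $T_x^{\omega}=A_x^{\omega}$.

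The main obstacle is the inductive step over $i$. We have to show that sequentially merging type pairs captures every partition exactly once, with no merge creating a clique of size greater than $d$; the latter is ensured because feasibility bounds $|X_i\cup Y_i|\le d$. We also have to show that counts bounded by $n$ never overflow, which holds since there are at most $n$ cliques.
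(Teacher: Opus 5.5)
Your proof is correct and takes the same route as the paper: bottom-up induction over the syntax tree, an inner induction over the enumerated pairs $(X_i,Y_i)$ at each join node, and then the relabelings $\gamma^R$ and $\beta$. Your type-based intermediate tables match $T_i$ exactly, which the paper's auxiliary graphs $G^i$ do not, since they add all edges between $X_j$- and $Y_j$-labels and so already admit merges of not-yet-processed type pairs. One small correction: pairs other than $X\subseteq[k]$, $Y\subseteq\{k+1,\dots,2k\}$ can be excluded because they are infeasible under your shifted reading of feasibility, not because their counts vanish, since merges of total size at most $d-1$ do create mixed types with nonzero counts.
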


\begin{proof}
    We prove the claim by a bottom up induction over $\syntaxtree$.

    For the base case, let $x$ be an introduce node with $\mu_x=i(v)$. Then $G_x$ consists only of the vertex $v$ with label $i$. Hence there is exactly one feasible partition of $V(G_x)$ into cliques of size at most $d$, namely the partition containing the single clique $\{v\}$. Therefore it holds that $A_x^0[s]=1$ for the unique state $s\in \states$ with $s(\langle i\rangle)=1$ and $s(M)=0$ for all other $M\in \mathcal{M}_d$, and $A_x^{\omega}[s']=0$ for all other pairs $(\omega,s')$. This matches the definition of $T_x^{\omega}$ in the algorithm, and thus $T_x^{\omega}=A_x^{\omega}$ for all $\omega\in [n]_0$.

    For the induction step, let $x$ be a join node of $\syntaxtree$, and assume that the statement holds for the children of $x$. Let $x_1$ and $x_2$ be the children of $x$, and let $(X_1,Y_1),\dots,(X_h,Y_h)$ be the ordered list of pairs of multisets used by the algorithm at node $x$.

    For each $i\in [h]_0$ and each $\omega\in [n]_0$, we define an auxiliary table $A_i^{\omega}\in \{0,1\}^{\states^{2k}}$ as follows: Let $G^i$ be the graph obtained from the shifted disjoint union of $G_{x_1}$ and $G_{x_2}$ by adding all edges between vertices of $G_{x_1}$ with labels in $X_j$ and vertices of $G_{x_2}$ with labels in $Y_j$ for each index $j\in [i]$. Then for $s\in \states^{2k}$ and $\omega\in [n]_0$, we set $A_i^{\omega}[s]=1$ if and only if there exists a partial solution in $G^i$ of weight $\omega$ and state $s$.

    We now prove by induction on $i\in [h]_0$ that $T_i^{\omega}=A_i^{\omega}$ for every $\omega\in [n]_0$.
    For $i=0$, no clique is allowed to use vertices from both sides. Hence a partition counted by $A_0^{\omega}$ is exactly the disjoint union of a partition of $G_{x_1}$ and a partition of $G_{x_2}$, where the labels on the second side are shifted by $k$. Thus $A_0^{\omega}$ is obtained by combining the tables of the two children via the shifted union operation $\punion$. By the induction hypothesis on the syntax tree, the child tables satisfy $T_{x_1}^{\omega}=A_{x_1}^{\omega}$ and $T_{x_2}^{\omega}=A_{x_2}^{\omega}$. It follows that $T_0^{\omega}=A_0^{\omega}$ for every $\omega\in [n]_0$.

    Assume now that $i\in [h]$ and that $T_{i-1}^{\omega}=A_{i-1}^{\omega}$ holds for all $\omega\in [n]_0$. We show that $T_i^{\omega}=A_i^{\omega}$ for all $\omega\in [n]_0$.
    First suppose that the pair $(X_i,Y_i)$ cannot form a clique across the join, either because $|X_i\cup Y_i|\geq d+1$ or because some label occurring in $X_i$ is not joined to some label occurring in $Y_i$. Then processing the $i$-th pair creates no new cliques, so the family of partial solutions does not change: $A_i^{\omega}=A_{i-1}^{\omega}$ for all $\omega$. Since it holds in this case, that $T_i^{\omega}=T_{i-1}^{\omega}$, the claim follows from the induction hypothesis.

    Suppose therefore that $(X_i,Y_i)$ is feasible.
    If it holds that $T_i^{\omega}[s]=1$, then, by the update rule of the algorithm, this value arises from some state $s'\in \states^{2k}$ and some integer $r'\in [\min\{s'(X_i),s'(Y_i)\}]_0$ with $T_{i-1}^{\omega'}[s']=1$, where $\omega'=\omega$ if $|X_i\cup Y_i|\leq d-1$ and $\omega'=\omega-r'$ if $|X_i\cup Y_i|=d$. By the induction hypothesis for $i-1$, there exists a packing counted by $A_{i-1}^{\omega'}[s']$. In this packing we choose $r'$ cliques of type $X_i$ and $r'$ cliques of type $Y_i$ and merge them pairwise. Since $(X_i,Y_i)$ is feasible, each merge produces a clique with label multiset $X_i\cup Y_i$; if $|X_i\cup Y_i|=d$ then each such merge contributes one clique of size $d$, and otherwise it contributes one incomplete clique of type $X_i\cup Y_i$. Hence, it follows that $A_i^{\omega}[s]=1$.

    For the converse implication, assume that $A_i^{\omega}[s] = 1$, and let $\mathcal{P}$ be a partition of the vertices of the shifted disjoint union of $G_{x_1}$ and $G_{x_2}$ into cliques of size at most $d$ that satisfies the three conditions in the definition of $A_i^{\omega}$. Then $\mathcal{P}$ results from a partition counted by $A_{i-1}^{\omega'}[s']$ by merging $r'$ pairs of cliques of type $X_i$ and $Y_i$ for some value $r'\in [\min\{s'(X_i), s'(Y_i)\}]_0$, where $\omega' = \omega$ if $|X_i\cup Y_i|\leq d-1$ and $\omega'=\omega-r'$ otherwise, and $s'$ is the state obtained from $s$ by subtracting $r'$ from $s(X_i\cup Y_i)$ and adding $r'$ to $s(X_i)$ and $s(Y_i)$.
    By the induction hypothesis for $i-1$, the algorithm sets $T_{i-1}^{\omega'}[s']=1$. Since the description of $s'$ and $\omega'$ matches the update rule of the algorithm, it follows that $T_i^{\omega}[s]=1$.

    We now define the auxiliary table $\hat{A}^{\omega}\in \{0,1\}^{\states^{2k}}$ as follows. Let $\hat{G}$ be the graph resulting from the disjoint union of $G_{x_1}$ and $G_{x_2}$ by adding all edges as defined by $\alpha$. Then $\hat{A}^{\omega}[s]=1$ if and only if there exists a partial solution in $\hat{G}$ of weight $\omega$ and state $s$. Clearly, $\hat{G}$ results from $G^h$ by applying the relabeling $\gamma^R$. Hence, a partition of $G^h$ contributes has state $s'$ and weight $\omega$ if and only if the same partition has state $s'_{\rightarrow \gamma^R}$ and weight $\omega$ in $\hat{G}$.
    
    Therefore, for every state $s\in\states$, it holds that
    \[
        \hat{A}^{\omega}[s] = \orstack{s'\in\states^{2k}\\ (s')_{\rightarrow \gamma^R}=s\downop} A_h^{\omega}[s'].
    \]
    This is exactly the same invariant as in the definition of the algorithmic table $\hat{T}_0^{\omega}$. Since we already proved that $A_h^{\omega}=T_h^{\omega}$ for every $\omega\in [n]_0$, it follows that $\hat{A}^{\omega}=\hat{T}_0^{\omega}$ for every $\omega\in [n]_0$.

    Finally, we show that $A_x^{\omega}=T_x^{\omega}$.
    Since $G_x$ results from the graph $\hat{G}$ by applying the relabeling $\beta$, it holds for each $s\in \states$ that
    \[
        A_x^{\omega}[s] = \orstack{s'\in\states\\ (s')_{\rightarrow \beta}=s} \hat{A}^{\omega}[s'].
    \]
    Since this is exactly the definition of $T_x^{\omega}$ from $\hat{T}_0^{\omega}$, and since we already proved that $\hat{A}_0^{\omega}=\hat{T}_0^{\omega}$, it follows that $A_x^{\omega}=T_x^{\omega}$.
\end{proof}

The following corollary follows then immediately from \cref{obs:ub-A-sol}.

\begin{corollary}\label{cor:ub-sol}
    The graph $G$ admits a $d$-clique packing of size at least $\target$ if and only if there exists a state $s\in \states$ and an integer $t'\geq \target$ such that $T_r^{t'}[s] = 1$.
\end{corollary}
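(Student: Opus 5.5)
The corollary is a direct combination of \cref{lem:ub-correct}, applied at the root $r$, with \cref{obs:ub-A-sol}. The only extra work is to turn ``size exactly $\target$'' into ``size at least $\target$''. Since $r\in\nodes$, \cref{lem:ub-correct} gives $T_r^{\omega}=A_r^{\omega}$ for every $\omega\in[n]_0$. It therefore suffices to show that $G$ has a $d$-clique packing of size at least $\target$ if and only if $A_r^{t'}[s]=1$ for some $s\in\states$ and some $t'\geq\target$.

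For the forward direction, let $\mathcal{C}$ be a $d$-clique packing of size $t'\geq\target$; we may take $t'=|\mathcal{C}|$. Then $t'\leq n/d\leq n$, so $t'\in[n]_0$ and the table $A_r^{t'}$ is defined. Apply \cref{obs:ub-A-sol} with $t'$ in place of $\target$, or argue directly as follows. Complete $\mathcal{C}$ to a partition $P$ of $V=V_r$ by adding every uncovered vertex as a singleton clique. $P$ is a partition into cliques of size at most $d$. Its weight is exactly $t'$, since $d\geq 3>1$ means singletons are never counted. Hence $A_r^{t'}[\statef(P)]=1$, and therefore $T_r^{t'}[\statef(P)]=1$.

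For the converse, suppose $T_r^{t'}[s]=1$ for some $t'\geq\target$. Then $A_r^{t'}[s]=1$, so there is a partition of $V$ into cliques of size at most $d$ in which exactly $t'$ cliques have size $d$. These $t'$ cliques are pairwise vertex-disjoint $d$-cliques in $G_r=G$. They form a $d$-clique packing of size $t'\geq\target$, and any $\target$ of them already witness the claim.

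Nothing here is technically hard. The step needing the most care is the bookkeeping: the weight index ranges over $[n]_0$, so every realizable packing size must be representable. This holds because the number of disjoint $d$-cliques is at most $n/d$. One must also check that the root graph $G_r$ coincides with $G$, including its vertex set, so that a partial solution at $r$ covers all of $V$.
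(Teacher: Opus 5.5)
Your proof is correct and takes the same route as the paper, which derives the corollary directly from \cref{obs:ub-A-sol} combined with \cref{lem:ub-correct} at the root $r$. What you add is the routine bookkeeping the paper leaves implicit: passing from a packing of size exactly $t'$ to one of size at least $\target$, and noting that $t'\le n/d$ ensures $T_r^{t'}$ is defined.
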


Now we bound the running time of the algorithm.

\begin{lemma}\label{lem:ub-time}
    All tables $T_x^{\omega}$ can be computed in time $n^{\Oh(k^{d-1})}$.
\end{lemma}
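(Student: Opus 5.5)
The proof is a direct count of table sizes and per-entry work. First I bound the number of states. The family $\mathcal{M}_d$ consists of multisets of size at most $d-1$ over $[k]$, so $|\mathcal{M}_d| \le \sum_{j=0}^{d-1}\binom{k+j-1}{j} = \Oh(k^{d-1})$ for fixed $d$. Similarly, $|\mathcal{M}^{2k}_d| = \Oh((2k)^{d-1}) = \Oh(k^{d-1})$. Each state is a map into $[n]_0$, so
\[
|\states| \le (n+1)^{\Oh(k^{d-1})} = n^{\Oh(k^{d-1})}, \qquad |\states^{2k}| \le n^{\Oh(k^{d-1})}.
\]
Together with the $n+1$ values of $\omega$, every table at a node has $n^{\Oh(k^{d-1})}$ entries. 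The syntax tree has $\Oh(n)$ nodes, since an NLC-expression has $n$ leaves and $n-1$ binary join nodes. It therefore suffices to show that each node is processed in $n^{\Oh(k^{d-1})}$ time.

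Introduce nodes are trivial. For a join node I bound each phase separately.
\begin{enumerate}
\item For $T_0^{\omega}$, I iterate over all tuples $(\omega_1,\omega_2,s_1,s_2)$, compute $s_1\punion s_2$ in time polynomial in $|\mathcal{M}^{2k}_d|$, and set the corresponding entry. This takes $n^2\cdot |\states|^2\cdot \mathrm{poly}(k^{d-1}) = n^{\Oh(k^{d-1})}$ time.
\item There are $h=|\mathcal{M}^{2k}|^2$ pairs $(X_i,Y_i)$. Only pairs with $|X_i|+|Y_i|\le d$ can be feasible, so although the ordering formally ranges over a possibly larger set, the number of non-trivial steps is $\Oh(k^{2d})$, which is polynomial in $k$. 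Pairs that are infeasible are a table copy. Since $X_i,Y_i$ are types of partial cliques, they may be taken from $\mathcal{M}^{2k}_d$, so $h\le \Oh(k^{2(d-1)})$ and each copy costs $n^{\Oh(k^{d-1})}$.
\item Each feasible step loops over all $\omega$, all $s\in\states^{2k}$ and all $r'\le n$. Each iteration does $\mathrm{poly}(k^{d-1})$ work, giving $n\cdot n^{\Oh(k^{d-1})}\cdot n = n^{\Oh(k^{d-1})}$ time per step.
\item The relabelings $\gamma^R$ and $\beta$ are computed by iterating over each source state $s'$, computing $(s')_{\rightarrow\gamma}$ in polynomial time, and OR-ing into the target entry. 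This takes time linear in the table size up to polynomial factors.
\end{enumerate}
Summing the phases gives $\mathrm{poly}(k)\cdot n^{\Oh(k^{d-1})}=n^{\Oh(k^{d-1})}$ per node, because $k\le n$ can be assumed. Multiplying by $\Oh(n)$ nodes gives the claim.

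The main obstacle is keeping the exponent at $k^{d-1}$ rather than $k^{d}$. This depends on the state never indexing multisets of size $d$, which the algorithm ensures because a merge reaching size $d$ only increments $\omega$. One subtlety needs checking: shifted or intermediate states in $\states^{2k}$ stay within $n^{\Oh(k^{d-1})}$. This holds since $2k$ labels only change constants, and all counts stay bounded by $n$ because counts are numbers of disjoint cliques in an $n$-vertex graph. I would also state explicitly that entries whose counts would exceed $n$ never arise, so restricting to $[n]_0$ loses nothing.
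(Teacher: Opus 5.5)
Your proposal is correct and follows essentially the same route as the paper. Both bound $|\mathcal{M}^{2k}_d| = \Oh(k^{d-1})$, so the state space has size $N = n^{\Oh(k^{d-1})}$, and then check that each phase of a join node (shifted union, the pair-merging steps, and the relabelings via $\gamma^R$ and $\beta$) costs time polynomial in $N$ over polynomially many nodes. Your sharper multiset count, the explicit $\Oh(n)$ node count, and restricting the pairs $(X_i,Y_i)$ to $\mathcal{M}^{2k}_d$ (giving $h=\Oh(k^{2(d-1)})$ instead of the paper's $h\le N^2$) are harmless refinements that do not change the argument.
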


\begin{proof}
    Let $N \coloneq |\states^{2k}|$.
    Since $\mathcal{M}^{2k}_d$ contains all multisets of size at most $d-1$ over $[2k]$, it holds that
    \[
    |\mathcal{M}^{2k}_d| \leq \sum_{j=0}^{d-1} (2k)^j = \frac{(2k)^d-1}{2k-1} 
    \leq \frac{2^d k^d}{2k - k} = \Oh(k^{d-1}).
    \]
    It follows that $N = (n+1)^{|\mathcal{M}^{2k}_d|} = n^{\Oh(k^{d-1})}$.
    Since $\states \subseteq \states^{2k}$, we also have $|\states| \leq N$.

    Now we show for every node $x\in\nodes$, that all tables $T_x^{\omega}$ for all values of $\omega$ can be computed in polynomial time in $N$ assuming that the tables $T_{x'}^{\omega'}$ for all children $x'$ of $x$ and all values $\omega'$ are given. Since we assume that the size of $\syntaxtree$ is bounded polynomially in $n$, this implies that the whole algorithm runs in polynomial time in $N$ following a bottom up dynamic programming scheme, and thus in time $n^{\Oh(k^{d-1})}$, which proves the claim.

    The bound is trivial for an introduce node $x$, since we can set the values $T_x^{\omega}$ directly by iterating over all values of $\omega$ and all entries $s\in \states$.

    Consider now a join node $x$. The number of values of $\omega$ is $n+1$, and the number of pairs $(X_i,Y_i)$ we iterate over is bounded by $h = N^2$.
    The table $T_0^{\omega}$ is computed by considering pairs of states $s_1,s_2\in \states$. Thus, for each fixed value of $\omega$, this requires at most $|\states|^2\leq N^2$ combinations, and therefore all tables $T_0^{\omega}$ are computed in time polynomial in $N$.

    Now some value $i\in [h]$, and assume that we have computed the tables $T_{i-1}^{\omega'}$ for all $\omega'\in [n]_0$. If $(X_i,Y_i)$ is not feasible, then $T_i=T_{i-1}$. Otherwise, for each value $\omega\in [n]_0$ and each state $s\in \states^{2k}$, the algorithm iterates over all $r'\in [\min\{s(X_i), s(Y_i)\}]_0$. Since $s(X)$ is bounded by $n$ for each state $s$ and a set $X$, there are at most $n+1$ choices for $r'$. For each such choice, the resulting state $s'$ is obtained by modifying a constant number of coordinates of $s$. Hence one transition from $T_{i-1}$ to $T_i$ requires polynomial time in $N$. Since there are $h$ such transitions, all auxiliary tables $T_i^{\omega}$ are computed in time polynomial in $N$.

    Finally, the table $T_x^{\omega}$ is obtained from $T_h^{\omega}$ by first computing $\hat{T}^{\omega}$ via the relabeling $\gamma^R$, then restricting it to $\hat{T}_0^{\omega}$ on $\states$, and then applying the relabeling $\beta$. Each of these three steps takes at most polynomial time in $N$, by iterating over all relevant states and updating the corresponding entry of the next table. Hence $T_x^{\omega}$ can be computed from $T_h^{\omega}$ in polynomial time in $N$. This concludes the proof.
\end{proof}

\begin{proof}[Proof of ~\cref{theo:ub}]
    Given the instance $(G,\target)$, together with a $k$-clique-expression of $G$ for some positive integer $k$, we first compute a $k$-NLC-expression of $G$. As discussed in the preliminaries, this can be done in polynomial time. Let $\syntaxtree$ be the syntax tree of the computed NLC-expression, and let $\nodes$ be the set of nodes of $\syntaxtree$.
    Our algorithm then computes all tables $T_x^{\omega}$ for each node $x\in \nodes$ and each value $\omega \in [n]_0$
    using \cref{ub-alg}. It outputs YES, if there exists some mapping $s\in \states$ with $T_r^{\target}[s]=1$, where $r$ is the root of $\syntaxtree$, and NO otherwise. The correctness of the algorithm follows from \cref{cor:ub-sol}. The algorithm runs in time $n^{\Oh(k^{d-1})}$ by \cref{lem:ub-time}.
\end{proof}

\section{Lower Bound}\label{section:lowerbound}

Let us fix some integer $d\geq 3$. In this section, we aim to prove that the $d$-\ClqPartp problem is $W[1]$-hard when parameterized by the linear clique-width, and that it cannot be solved in time $n^{o(k^{d-1})}$ assuming ETH, even when a linear $k$-clique-expression is given as part of the input.
We base our lower bound on a reduction from the \textsc{Multi-Colored Clique} problem defined as follows.

\problemdef{\textsc{Multi-Colored Clique}}
{A graph $G$ with a partition of its vertex set into $k$ color classes.}
{Is there a clique in $G$ containing one vertex from each partition?}

We also use a result from Lokshtanov et al.~\cite{DBLP:journals/eatcs/LokshtanovMS11} that states the following.

\begin{lemma}[Theorem 5.2.]\label{lem:multi-colored-clique}
    Assuming ETH, there is no $f(k)n^{o(k)}$ time algorithm for the \textsc{Multi-Colored Clique}, even on instances with the same number of vertices in each color class and the same number of edges between each pair of color classes, where $k$ is the number of colors and $n$ is the number of vertices in each color class.
\end{lemma}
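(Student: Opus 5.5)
This lemma is a restatement of a result of Lokshtanov et al.~\cite{DBLP:journals/eatcs/LokshtanovMS11}, so the plan is to reproduce the standard argument: first the $n^{o(k)}$ lower bound for \textsc{Multi-Colored Clique}, then a padding step that makes the class sizes and the pairwise edge counts uniform.

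First, by the sparsification lemma and the standard linear reduction, ETH implies that \textsc{$3$-Coloring} cannot be solved in $2^{o(N)}$ time on graphs $H$ with $N$ vertices and $\Oh(N)$ edges. Given such an $H$ and a number $k$, we split $V(H)$ into $k$ groups $V_1,\dots,V_k$ of size at most $\lceil N/k\rceil$.
\begin{itemize}
\item Color class $i$ of the new instance $G$ contains one vertex for each map $V_i\to\{1,2,3\}$. This gives exactly $3^{\lceil N/k\rceil}$ vertices per class; if a group is smaller, pad it with dummy vertices of $H$ that have no edges.
\item Two vertices of different classes are adjacent iff both maps are proper on their own groups and their union is proper on the edges of $H$ between the two groups.
\end{itemize}
A multicolored $k$-clique in $G$ is then exactly a proper $3$-coloring of $H$. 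We have $n=3^{\lceil N/k\rceil}$, so $n^{o(k)}=2^{o(N)}$ as long as $k\le N$. All classes already have equal size, because the vertices corresponding to improper maps are kept but made isolated.

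Next we equalize the edge counts between pairs of classes without changing the answer. Let $m=\max_{i<j} m_{ij}$, where $m_{ij}$ is the number of edges between classes $i$ and $j$. For each pair $\{i,j\}$ we add $m-m_{ij}$ new vertices to class $i$ and the same number to class $j$, all reserved for the pair $\{i,j\}$, and connect them by a perfect matching. Each reserved vertex has exactly one neighbour, lying in the other class of its pair. Hence for $k\ge 3$ no reserved vertex can lie in a multicolored $k$-clique, and the answer is unchanged; the case $k\le 2$ is trivial and can be excluded. Finally we add isolated vertices so that all classes again have the same size. Since $m\le n^2$, the new class size is $n'=\Oh(k^2n^2)$. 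Thus an algorithm running in $f(k)n'^{o(k)}$ time still runs in $f(k)n^{o(k)}$ time on the original instance.

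The step I expect to need the most care is handling the factor $f(k)$ together with the $o(k)$ in the exponent. Suppose an algorithm runs in $f(k)\,n^{k/s(k)}$ time with $s$ unbounded and nondecreasing. We choose $k=k(N)$ to be a slowly growing unbounded function such that $f(k)\le 2^{N/s(k)}$, which is possible since $f$ is computable. The total running time is then $2^{N/s(k)}\cdot 3^{(N/k+1)k/s(k)}=2^{\Oh(N/s(k))}=2^{o(N)}$, which contradicts ETH. The polynomial blow-up from $n$ to $n'$ only changes constants in the exponent, so it does not affect this argument.
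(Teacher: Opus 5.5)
Your proposal is correct and follows the standard argument behind the cited Theorem~5.2 of Lokshtanov, Marx and Saurabh; the paper gives no proof of its own and simply imports the result. The grouping reduction from $3$-Coloring yields the $n^{o(k)}$ bound, and your padding by degree-one matching vertices and isolated vertices correctly makes class sizes and pairwise edge counts uniform for $k\ge 3$. The one point to tighten is the choice of $k(N)$: it must be computable without reference to $s$, which need not be computable. For example, take the largest $k\le \log N$ such that $f(1),\dots,f(k)$ can be evaluated within $N$ steps and are all at most $N$. After replacing $s$ by $\min\{s(k),k\}$, this choice already gives your inequality $f(k)\le 2^{N/s(k)}$ for all large $N$.
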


However, we will restrict to a special case of this problem, that we call $d$-\textsc{Binomial Multi-Colored Clique} for some positive integer $d$, where the number of colors is a binomial coefficient $\binom{k}{d}$ for some integer $k\in \mathbb{N}$. We show that the lower bound holds for this restricted problem as well.

For each positive integer $d$, let $\mathcal{H}_d$ be the family of all $\binom{k}{d}$-partite graphs with $n$ vertices in each partition, and $m$ edges between each pair of partitions for some arbitrary integers $n,m,k\in \mathbb{N}$. 
We show that this problem stays hard even when restricted to instances in $\mathcal{H}_d$ for any positive integer $d$. Formally, we define the restricted version as follows:

\problemdef{$d$-\textsc{Binomial Multi-Colored Clique}}
{A graph $H\in \mathcal{H}_d$.}
{Is there a clique in $H$ containing one vertex in each partition?}

\begin{corollary}\label{cor:binomial-multi-colored-clique}
    For every positive integer $d$, the $d$-\textsc{Binomial Multi-Colored Clique} problem is $W[1]$-hard parameterized by $k$, and cannot be solved in time $n^{{o\big(\binom{k}{d}\big)}}$ assuming ETH, where $\binom{k}{d}$ is the number of partitions of the input graph.
\end{corollary}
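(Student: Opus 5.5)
The plan is to reduce from \textsc{Multi-Colored Clique} as given by \cref{lem:multi-colored-clique}, padding the number of colors up to a binomial coefficient. Let $(G,\kappa)$ be an instance with $\kappa$ color classes, each of size $n$, and with $m$ edges between each pair of classes. Fix $d$ and choose the smallest integer $k$ with $\binom{k}{d}\geq \kappa$. Since $\binom{k-1}{d}<\kappa$ and $\binom{k}{d}/\binom{k-1}{d}=k/(k-d)$ is bounded by a constant once $k\geq 2d$, we get $\binom{k}{d}=\Theta(\kappa)$ for fixed $d$. The finitely many small cases with $k<2d$ have $\kappa$ bounded by a constant depending only on $d$, so they can be absorbed into the asymptotics. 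Set $K\coloneq\binom{k}{d}$.

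Next I would add $K-\kappa$ new color classes. Each new class receives $n$ fresh vertices, and every edge between a new class and any other class (old or new) is inserted. This makes the numbers of edges between pairs of classes non-uniform: old--old pairs have $m$ edges, while pairs involving a new class have $n^2$. To restore uniformity, I would first preprocess the original instance so that $m=n^2$ is impossible to violate: between each old pair, add dummy edges between pairs of vertices from two additional padding vertices inserted into every class, in a way that cannot create a multicolored clique. Concretely, the cleaner route is to require a fixed count: add to each class $c$ exactly $n$ isolated-ish padding vertices. Between every pair of classes, connect padding vertices among themselves and to original vertices with a carefully chosen number of edges so that each pair of classes gets exactly $M$ edges, and so that no padding vertex has neighbors in all other classes. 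A simple way is to let the padding vertices of class $c$ be adjacent only to padding vertices of classes $c'>c$ but never to those of one designated class. Then no clique can use a padding vertex. Among the new classes, all edges remain present, so any multicolored clique of $G$ extends, and conversely any multicolored clique of the new graph restricts to one of $G$. The resulting instance lies in $\mathcal{H}_d$ with $\Oh(n)$ vertices per class, and its size is polynomial.

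The main obstacle is exactly this edge-count balancing, i.e., making all pairs of classes carry the same number of edges without creating spurious cliques. The cleanest fallback is to use the fact that the hardness in \cref{lem:multi-colored-clique} holds for regular instances, together with complementing within bipartite pairs if necessary. Alternatively, one can take each new class to be an independent copy of an existing class's structure, replicating its adjacency pattern. Once the instance is in $\mathcal{H}_d$, an $n^{o(K)}$ algorithm would solve the original instance in $n^{o(\kappa)}$ time, contradicting ETH. Likewise, since $k\leq f(\kappa)$ and the construction takes FPT time, it is a parameterized reduction, which yields $W[1]$-hardness.
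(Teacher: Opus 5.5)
Your choice of $k$ (smallest with $\binom{k}{d}\ge\kappa$, using $\binom{k-1}{d}<\kappa$ and $\binom{k}{d}/\binom{k-1}{d}=k/(k-d)\le 2$ for $k\ge 2d$) is correct. Your overall plan of adding universal new classes plus padding vertices is the same as the paper's. The gap is the step you yourself call the main obstacle: as written, the proposal never actually produces an instance in $\mathcal{H}_d$.
\begin{itemize}
\item You leave the target count $M$ open.
\item ``Two additional padding vertices'' per class cannot supply the $n^2-m$ missing edges.
\item Your ``designated class'' rule only works if the designated class is one of the new classes. If it were an original class $D$, padding vertices (adjacent only to padding vertices) could add no edges to a pair $\{c,D\}$ with $c$ original, so that pair stays at $m$ edges.
\item The fallbacks do not help. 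Complementing bipartite pairs changes which multicolored cliques exist. Copying a class does not by itself equalize edge counts: the pair formed by a class and its copy, or by two copies of one class, needs an edge set that is both count-correct and clique-preserving, which you do not give.
\end{itemize}

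The missing idea is to let the universal new classes fix the target. Pairs involving a new class already have exactly $n^2$ edges, so set $M=n^2$ and only ever add edges between padding vertices of two original classes. Concretely, this is the paper's construction:
\begin{itemize}
\item If $K=\kappa$, keep the instance unchanged.
\item Otherwise, give each new class $n$ fresh vertices plus $n$ isolated vertices. The fresh vertices are adjacent to all original vertices of the old classes and to the fresh vertices of the other new classes.
\item Give each old class $n$ padding vertices. For each pair of old classes, add exactly $n^2-m\ge 0$ edges between their padding vertices.
\end{itemize}
Every class then has $2n$ vertices and every pair of classes has exactly $n^2$ edges. Since $K>\kappa$, a multicolored clique must use a fresh vertex of some new class. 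That vertex has no padding neighbors, so the clique uses only original vertices in the old classes and restricts to a multicolored clique of $G$. Conversely, any multicolored clique of $G$ extends by arbitrary fresh vertices. With this in place, your conclusion $(2n)^{o(K)}=n^{o(\kappa)}$ and your parameterized-reduction argument go through.
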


\begin{proof}
    We show first that there exists a fixed integer $k_0$ (that depends on $d$ only), such that for each integer $k\geq k_0$ there exists an integer $k'$ with $k\leq \binom{k'}{d} < 2k$.
    We choose $k_0$ to be the smallest integer such that for the smallest integer $k'_0$ with $\binom{k'_0}{d}\geq k$ it holds that $k'_0 \geq 2d$.
    Then it holds for each $k\geq k_0$ and for $k'$ the smallest integer with $\binom{k'}{d}\geq k$ that $k'\geq k'_0\geq 2d$, and thus 
    \begin{align*}
        &\frac{\binom{k'}{d}}{k}\leq\frac{\binom{k'}{d}}{\binom{k'-1}{d-1}}\\
        &= \frac{k'!}{d!(k'-d)!}\frac{(d-1)!(k'-1-d)!}{(k'-1)!}\\
        &= \frac{k'}{k'-d} = \frac{k'-d+d}{k'-d} = 1 + \frac{d}{k'-d} \leq 2.
    \end{align*}

    Now we prove the claim. Assume that there exists an algorithm that solves the $d$-\textsc{Binomial Multi-Colored Clique} problem in time $n^{o\big(\binom{k'}{d}\big)}$ on graphs with $\binom{k'}{d}$ partitions, where each 
    color class has the same number of vertices and the same number of edges between each pair of color classes.

    Given an arbitrary instance of the \textsc{Multi-Colored Clique} problem with $k$ colors, such that each color class has $n$ vertices and there are $m$ edges between each pair of color classes, for some integers $n,m,k \in \mathbb{N}$, we construct the instance of the $d$-\textsc{Binomial Multi-Colored Clique} problem as follows:
    If $k = \binom{k'}{d}$ for some integer $k'$, then we use the same instance as the input. Otherwise, we create an instance with $\binom{k'}{d}$ color classes where $k'$ is the smallest integer with $\binom{k'}{d}\geq k$, 
    by adding $\binom{k'}{d}-k$ new partitions to the input graph, each containing $2n$ vertices. We also add $n$ vertices to each of the original $k$ partitions. We make each of the first $n$ vertices of the added partitions adjacent to the original $n$ vertices of each of the original $k$ partitions, and to the first $n$ vertices of each of the other added partitions. Finally, for any pair of original partitions, we add arbitrary edges between the vertices added to these partitions to increase the number of edges between them to $n^2$. This results in an instance with $\binom{k'}{d}$ partitions, where each partition has $2n$ vertices, and there are $n^2$ edges between each pair of partitions.
    
    We claim that the resulting instance admits a clique with a vertex from each partition if and only if the original instance admits one. Clearly, such a clique must contain a vertex from the first $n$ vertices of each added partition, since the other $n$ vertices are isolated. Since there exist at least one added partition, and since each such vertex is adjacent to all and only to the original $n$ vertices of each original partition, such a clique must contain a vertex from each of the original $k$ partitions, and thus it is a multi-colored clique in the original instance. Conversely, each such clique can be extended to a clique in the final graph by including the first vertex of each added partition.

    Hence, by applying the assumed algorithm for the $d$-\textsc{Binomial Multi-Colored Clique} problem on the constructed instance, we can solve the original instance of the \textsc{Multi-Colored Clique} problem in time $(2n)^{o\big(\binom{k'}{d}\big)}$ = $n^{o\big(k\big)}$ since we can assume that $\binom(k') {d} < 2k$. This contradicts \cref{lem:multi-colored-clique}, and thus the claim follows.

    Finally, the $W[1]$-hardness of the $d$-\textsc{Binomial Multi-Colored Clique} problem follows from the fact that the \textsc{Multi-Colored Clique} problem is $W[1]$-hard parameterized by $k$, and that the reduction described above is a parameterized reduction.
\end{proof}

In the rest of this section, we provide a reduction from the $(d-1)$-\textsc{Binomial Multi-Colored Clique} problem to the $d$-\ClqPartp problem on graphs of bounded clique-width that proves the claimed lower bounds.
Let $H$ be an instance of the $(d-1)$-\textsc{Binomial Multi-Colored Clique} problems with $r\coloneq \binom{k}{d-1}$ partitions. Let $n$ be the number of vertices in each partition, and $m$ be the number of edges between each pair of partitions.

We define the family of \emph{colors} $R\coloneq \binom{[k]}{d-1}$, and we call each subset $\{i_1, \ldots, i_{d-1}\}\in R$ a \emph{color}. We assign a distinct color to each partition of $H$ (arbitrarily), where we denote the partition with color $\{i_1, \ldots, i_{d-1}\}$ by $V_{\{i_1, \ldots, i_{d-1}\}}$. We also call each partition a \emph{color class}. For each color $C\in R$, let $V_C = \{u_C^1, \ldots, u_C^n\}$ be an arbitrary but fixed enumeration of the vertices in the color class $V_C$.

Our goal is to build an instance $\hat{G}$ of the $d$-\ClqPartp problem with bounded clique-width $\Oh(k)$ such that $H$ has a multi-colored clique if and only if the constructed instance is a YES-Instance. In order to achieve this, we first define an auxiliary graph $G$ with the latter property. Then we define the graph $\hat{G}$ of bounded clique-width from $G$ by adding edges to it, and we show that $G$ has a $d$-clique partition if and only if $\hat{G}$ has one.

The graph $G$ consists of a collection of the so called gadgets---graphs of constant clique-width that serve different functionalities. We add different copies of each gadget representing different aspects of the input instance. Mainly, we have two kinds of gadgets, each consisting of ``simpler'' gadgets. First, we define \emph{color gadget}s that represent the different partitions of the multicolored clique instance.
We add $r$ color gadgets, each corresponding to a color in $R$, and it encodes the id of the vertex selected in the corresponding color class in a hypothetical multi-colored clique. 
A color gadget consists of a sequence of \emph{id gadget}s and \emph{cuts} between them. An id gadget allows the selection of a specific vertex in a partition by its id (in $[n]$), while cuts ensure that consecutive id gadgets select the same id.

Second, we define \emph{edge gadget}s, which ensure that the selected ids of different partitions are adjacent. An edge gadget consists of $m$ \emph{simple edge gadget}s that represent the single edges between two color classes, and a \emph{selection gadget} that ``consumes'' all but a single simple edge gadget that will be ``consumed'' by the id gadgets adjacent to it. Now we provide a formal description of these gadgets:

\subparagraph{Color gadget.}
The construction consists of $r$ \emph{color gadgets}, each corresponding to a color in $R$. A color gadget is a graph that encodes the choice of a vertex in the corresponding color class. It has a path-like structure, as it consists of a sequence of gadgets called \emph{id gadgets} connected by small independent sets. While an id gadget encodes the choice of a vertex in the corresponding color class, the independent sets between id gadgets ensure that the choice is copied correctly from one id gadget to the next. This gives the graph $G$ a grid-like structure, where the rows correspond to colors and each column corresponds to a pair of colors, and ensures that the chosen vertices in these two color classes are adjacent in $H$.

More concretely, let $\mathscr{L} \coloneq \binom{R}{2}$ be the set of pairs of colors, and $\ell \coloneq |\mathscr{L}|$.  We define a \emph{color gadget} as the graph consisting of a sequence of $2\ell$ copies of a gadget called a \emph{color column}, where a color column consists of $n$ $(d-1)$-cliques. We add an independent set of size $n$ between each two consecutive color columns, and make its vertices adjacent to all vertices of both columns on its sides.
We also add an independent set of size $n$ adjacent to all vertices of both the first and the last color columns.
Hence, each color gadget consists of $2\ell$ color columns and $2\ell$ independent sets in an alternating manner. We partition a color gadget into $\ell$ segments, each consisting of two consecutive color columns together with the independent set between them. We call each such segment an \emph{id gadget}. We call each independent set inside an id gadget an \emph{inner cut}, and each independent set between two id gadgets an \emph{outer cut}.

Note that each element of $\mathscr{L}$ is an unordered pair of subsets of $[k]$ of size $d-1$ each. We fix an arbitrary enumeration of the elements $\mathscr{L}$ as $\mathscr{L}=\{L_1, \ldots, L_\ell\}$. We denote the $q$-th id gadget in the color gadget $\colgad_{\{i_1,\dots i_d\}}$ by $\idgad_{\{i_1,\dots,i_d\}}^q$. Hence, each color gadget forms a row in the grid, and for each integer $q\in [\ell]$, the family consisting of the $q$-th id gadget of each color gadget forms the $q$-th column in the grid. On a more fine-grained view, each row of the grid consists of $n$ finer rows, we call them \emph{unit rows}, where each unit row consists of a sequence of $2\ell$ $(d-1)$-cliques, and $n$ single vertices between them in an alternating manner, where each such single vertex belongs to an independent set of the color gadget.

Intuitively, the choice of whether such a vertex is attached to the clique preceding it or the clique succeeding it in the unit row encodes a Boolean value, where the accumulation of all these values over an id gadget encodes the id of the vertex chosen in the corresponding color class.

\subparagraph{Edge gadget.}
In order to ensure that the chosen vertices in the color classes are adjacent, for each pair of colors $L_q=\{C_1, C_2\}$, we modify the $q$-th id gadget of the color gadgets $\colgad_{C_1}$ and $\colgad_{C_2}$ as follows. First we remove the inner cuts from both gadgets. Then we build the so called \emph{edge gadgets} between these two id gadgets as follows: we start by defining a \emph{simple edge gadget} that corresponds to a single edge $\{u_{C_1}^{j^1}, u_{C_2}^{j^2}\}$ in $H$. A simple edge gadget consists of an independent set of size $2n$ with vertices $v_1,\dots v_n$ and $w_1, \dots w_n$. The vertices $v_1,\dots v_{j_1}$ are adjacent to all vertices of the first color column of $\colgad_{C_1}^q$, and the vertices $v_{j_1+1},\dots v_n$ are adjacent to all vertices of the second color column of $\colgad_{C_1}^q$. Similarly, the vertices $w_1,\dots w_{j_2}$ are adjacent to all vertices of the first color column of $\colgad_{C_2}^q$, and the vertices $w_{j_2+1},\dots w_n$ are adjacent to all vertices of the second color column of $\colgad_{C_2}^q$. 

As mentioned before, an edge gadget corresponding to the pair of colors $L_q=\{C_1, C_2\}$ consists of the $m$ simple edge gadgets $\simpegad_1,\dots\simpegad_m$ corresponding to the $m$ edges between the two color classes, together with a $\mathcal{F}$-selection gadget added to them, where $\mathcal{F}$ is the family $\big\{V(\simpegad_1), \dots, V(\simpegad_m)\big\}$.
A selection gadget allows to select a single simple edge gadget from the family by ``consuming'' all but one of them. Now we provide a formal description of the selection gadget.

\subparagraph{Selection gadget.}
Given a family $\mathcal{F}$ of $m$ sets of vertices $X_1,\dots X_m$, of size $r$ each where $r$ is even, such that the union of all these sets forms an independent set, an $\mathcal{F}$-selection gadget $\selgad(\mathcal{F})$ consists of $m-1$ copies of a structure called $r$-\emph{selection column} $C_1$,\dots $C_{m-1}$, where $C_i$ is added between $X_i$ and $X_{i+1}$.

The selection column $C_i$ is a simple cycle on $2r$ vertices. We partition the vertices of this cycle into $4$ sets by assigning the colors $c_0, \dots c_3$ to them, where the color assigned to each vertex is defined by its id modulo $4$. Let $v_1,\dots v_r$ be the vertices of $X_i$ and $w_1,\dots w_r$ be the vertices of $X_{i+1}$ we add all edges between the vertices $v_i$ for odd values of $i$ and the vertices colored $c_1$ and $c_2$, and all edges between the vertices $v_i$ for even values of $i$ and the vertices colored $c_3$ and $c_0$. Similarly, we add all edges between the vertices $w_i$ for odd values of $i$ and the vertices colored $c_2$ and $c_3$, and all edge between the vertices $w_i$ for even values of $i$ and the vertices colored $c_1$ and $c_0$.

The even cycle structure of the selection column allows to decompose it into two perfect matchings, a ``horizontal'' one, containing the edges between the vertices colored $c_0$ and $c_1$ and the edges between the vertices colored $c_2$ and $c_3$, and a ``vertical'' one, containing the rest of the edges. Intuitively, we connect the endpoints of the horizontal edges to vertices of $X_i$, and the endpoints of the vertical edges to vertices of $X_{i+1}$. We separate the vertices of $X_i$ and $X_{i+1}$ into even and odd indices to avoid conflicts, where we connect the even indexed vertices to the edges of even indices of either matching, and the odd indexed vertices to the edges of odd indices.

Finally, we add $r$ cliques of size $d-3$ each, and make each of these cliques adjacent to all vertices of the selection column and all vertices of the sets in $\mathcal{F}$. This ensures that any triangle formed between $\mathcal{F}$ and the selection column can be extended to a clique of size $d$ and vice versa.

This gadget is inspired by a gadget that was recently introduced by Bojikian et al.~\cite{DBLP:conf/esa/BojikianCK25} for triangle packing parameterized by cutwidth.
Intuitively, any solution $\mathcal{P}$ must pack the edges of exactly one of the two perfect matchings of a selection column in its cliques, as once one edge of either matching is packed in some $d$-clique, all the other edges must agree. Hence, $\mathcal{P}$ either includes the edges of the horizontal matching together with the vertices of $X_i$, or the edges of the vertical matching together with the vertices of $X_{i+1}$. The endpoints of the matching edges then, together with a vertices of $X_i$ or $X_{i+1}$, form triangles that can be extended to cliques of size $d$ by adding the $(d-3)$-cliques adjacent to the selection columns. Since we add $m-1$ selection columns, $\mathcal{P}$ must pack all but one simple edge gadgets with the selection columns, keeping exactly one simple edge gadget free to be packed with the id gadgets adjacent to it.
We note that while connecting each edge of a selection column to a single vertex of $X_i$ or $X_{i+1}$ results in a ``cleaner'' construction, these ``redundant'' edges are required to ensure that the resulting graph has bounded clique-width, as they allow to divide the vertices of the selection column into four equivalent classes as described above.

This concludes the description of a selection gadget, and thus the description of the graph $G$. Now we prove the correctness of our reduction.

\begin{figure}
    \centering
    \begin{subfigure}{.3\textwidth}
        \centering
        \includegraphics[height=3.2cm]{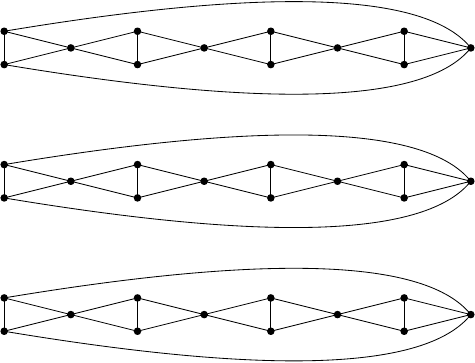}\\
        $(a)$
    \end{subfigure}
    \hfill     \begin{subfigure}{.3\textwidth}
        \centering
        \includegraphics[height=3.2cm]{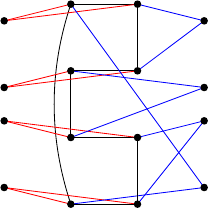}\\
        $(b)$
    \end{subfigure}
    \hfill     \begin{subfigure}{.3\textwidth}
        \centering
        \includegraphics[height=3.2cm]{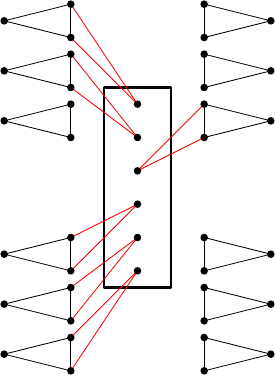}\\
        $(c)$
    \end{subfigure}
    \caption{
        In $(a)$ we depict three ``unit rows'' of a \emph{color gadget} for $d=3$. Each unit row consists of an alternation of $2$-cliques and single vertices.
        In $(b)$ we depict a part of a \emph{selection gadget}.
        We depict two simple edge gadgets (the independent sets on the left and right) and a selection column between them. We omit the redundant edges for clarity, connecting each vertex of a simple edge gadget to the endpoints of a single edge. Note that the vertices of the first simple edge gadget build triangles with the edges of the horizontal matching of the selection column, while the vertices of the second simple edge gadget build triangles with the edges of the vertical matching.
        Finally, in $(c)$ we depict a simple edge gadget replacing the inner cuts of two id gadgets. In this example, the simple edge gadget corresponds to the edge $\{u_{C_1}^2, u_{C_2}^3\}$ in $H$.}
\end{figure}

\begin{lemma}\label{lem:lb-prt-if-clq}
    If $H$ has a multi-colored clique, then $G$ has a $d$-clique partitioning.
\end{lemma}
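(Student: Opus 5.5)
Given a multi-colored clique $K$ of $H$ with chosen vertex $u_C^{j_C}$ in each color class $V_C$, we build the partition gadget by gadget. \emph{Selection gadgets:} in the edge gadget for $L_q=\{C_1,C_2\}$, let $e^*=\{u_{C_1}^{j_{C_1}},u_{C_2}^{j_{C_2}}\}$; it is an edge of $H$ because $K$ is a clique, so it corresponds to some simple edge gadget $\simpegad_p$. We keep $\simpegad_p$ free. For $i<p$ the column $C_i$ uses its horizontal matching together with $X_i$, and for $i\geq p$ it uses its vertical matching together with $X_{i+1}$. Concretely, the $r$ edges of the chosen matching in $C_i$ (a $2r$-cycle, so the matching covers it) are paired bijectively with the vertices of $X_i$ (resp.\ $X_{i+1}$) by parity. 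By the adjacency rules via the colors $c_0,\dots,c_3$, each vertex together with its matched edge forms a triangle. We then extend these $r$ triangles to $d$-cliques with the $r$ attached $(d-3)$-cliques, one per triangle; these are adjacent to everything in the column and in $\mathcal{F}$. This covers all selection-column vertices, all $(d-3)$-cliques, and every $X_i$ with $i\neq p$.

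\emph{Color gadgets:} each unit row is an alternation of $(d-1)$-cliques and single cut vertices. In the row of $\colgad_C$, for the unit row indexed $a\in[n]$ we must attach each cut vertex to exactly one neighbouring $(d-1)$-clique so that every $(d-1)$-clique receives exactly one vertex. There are $2\ell$ cliques and $2\ell$ cut vertices per unit row (cyclically, counting the cut joining the last and first columns), so the attachment must be a cyclic shift, all ``left'' or all ``right''. To make this compatible with the edge gadgets we have to fix the orientation of the independent sets consistently with $j_C$. I would first check that the intended encoding is: unit rows $a\le j_C$ take their vertex from one side and rows $a>j_C$ from the other. Each id gadget for a column $q$ with $C\notin L_q$ keeps its inner cut, and each inner/outer cut vertex joins the appropriate column clique. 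The cut vertices are adjacent to all vertices of both neighbouring columns, so any cut vertex completes any $(d-1)$-clique to a $d$-clique. Hence we may match cut vertices to cliques arbitrarily, and only the counts per column matter.

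\emph{Edge gadget columns:} in id gadget $q$ of $\colgad_{C_1}$ with $L_q=\{C_1,C_2\}$ the inner cut is removed and replaced by $v_1,\dots,v_n$ of the free $\simpegad_p$. Here $v_1,\dots,v_{j_{C_1}}$ are adjacent to the first column and the rest to the second. So the first column receives $j_{C_1}$ vertices from $\simpegad_p$ and needs $n-j_{C_1}$ from the preceding outer cut. The second column receives $n-j_{C_1}$ and needs $j_{C_1}$ from the following outer cut. Symmetrically with $w$ and $j_{C_2}$ for $C_2$. Thus the invariant to maintain along each row $C$ is: every outer cut sends exactly $j_C$ vertices forward (to the first column of the next id gadget) and $n-j_C$ backward. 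Inside an id gadget not involving $C$, the first column takes $n-j_C$ from the left outer cut, so it needs $j_C$ from the inner cut. The inner cut then sends $n-j_C$ to the second column, which needs $j_C$ from the right outer cut, consistent again. Every cut has exactly $n$ vertices and every column $n$ cliques, so all counts close up, including the wrap-around cut.

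The main obstacle is just this bookkeeping: confirming that the counts $j_C$ versus $n-j_C$ are consistent at every cut type, including the wrap-around cut. I would also confirm that the ordering of $v$'s (first $j$ adjacent to the first column) matches the directions I chose; if not, I would swap to sending $n-j_C$ forward, which is symmetric. Every vertex is then covered exactly once by a $d$-clique.
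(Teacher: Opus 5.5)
Your proposal is correct and follows the same construction as the paper: keep the simple edge gadget of the selected edge free, and let the selection columns absorb all other simple edge gadgets via the appropriate perfect matching plus the $(d-3)$-cliques. Each inner cut or free edge gadget then sends $j_C$ vertices to the left column and $n-j_C$ to the right column, and the outer cuts take the complements. The one slip is your summarizing invariant, which is reversed relative to your own (correct) counts: each outer cut must send $n-j_C$ forward and $j_C$ backward. The paper's explicit outer-cut rule contains the same reversal, so your detailed bookkeeping is the consistent version.
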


\begin{proof}
    Let $Q$ be a multi-colored clique in $H$. For each color $C\in R$, let $u_C^{i_C}\in V_C$ be the unique vertex of $Q$ in the color class $V_C$, where $i_C\in [n]$ is its id. We show how to partition $V(G)$ into cliques of size exactly $d$.

    We start with the color gadgets. Fix a color $C\in R$, and consider an id gadget $\idgad_C^q$ whose inner cut is present. Let $L_C^q$ and $R_C^q$ be its left and right color columns. Each of these columns consists of $n$ vertex-disjoint $(d-1)$-cliques; we denote them by $L_C^q(1),\dots,L_C^q(n)$ and $R_C^q(1),\dots,R_C^q(n)$. Let $I_C^q=\{x_1,\dots,x_n\}$ be the inner cut of $\idgad_C^q$. For each $t\in [n]$, we pack $x_t$ together with
    $L_C^q(t)$ in a $d$-clique if it holds that $t\leq i_C$, and with $R_C^q(t)$ otherwise.
    Since every vertex of the inner cut is adjacent to all vertices of both neighboring columns, this yields $n$ cliques of size $d$. Hence exactly the first $i_C$ vertices of the inner cut are packed with the left column, and the remaining $n-i_C$ vertices are packed with the right column.

    Now let $O$ be an outer cut of the same color gadget, lying between two consecutive id gadgets. We pack $O$ in the reverse way: for $O=\{x_1,\dots,x_n\}$ and its neighboring columns are $L$ on the left and $R$ on the right (for the outer cut added the end of each color gadget, we consider the last color column as its left, and the very first one as its right), then for each $t\in [n]$ we put $y_t$ together with
    $L(t)$ if $t> i_C$, and with $R(t)$ otherwise.
    Thus an outer cut sends the first $i_C$ vertices to the right and the remaining vertices to the left. Applying this rule to every outer cut shows that the same value $i_C$ is propagated through the whole color gadget. In particular, every $(d-1)$-clique in a color column that does not belong to an edge gadget is extended by exactly one cut vertex to a $d$-clique.

    Next consider a pair of colors $L_q=\{C_1,C_2\}\in \mathscr{L}$ and the corresponding edge gadget. Since $Q$ is a clique in $H$, the selected vertices $u_{C_1}^{i_{C_1}}$ and $u_{C_2}^{i_{C_2}}$ are adjacent in $H$. Let $\simpegad_q$ be the simple edge gadget corresponding to that edge, and 
    Let $v_1,\dots,v_n$,$w_1,\dots,w_n$ be the vertices of $\simpegad_q$,    
    where the vertices $v_t$ are adjacent to the id gadget of $C_1$ and the vertices $w_t$ are adjacent to the id gadget of $C_2$. We use this simple edge gadget instead of the deleted inner cuts. For each $t\in [n]$, we pack $v_t$ together with the $t$-th $(d-1)$-clique of the first column of $\idgad_{C_1}^q$ if $t\leq i_{C_1}$, and with the $t$-th $(d-1)$-clique of the second column otherwise. Analogously, we pack $w_t$ together with the $t$-th $(d-1)$-clique of the first column of $\idgad_{C_2}^q$ if $t\leq i_{C_2}$, and with the $t$-th $(d-1)$-clique of the second column otherwise. These adjacencies exist by the definition of a simple edge gadget. Therefore the chosen simple edge gadget realizes exactly the same left-right split as the deleted inner cuts of both id gadgets.

    It remains to cover the vertices inside the selection gadgets. Consider the selection gadget attached to the family $\mathcal{F}=\{X_1,\dots,X_m\}$ of simple edge gadgets for the pair $L_q$, and let $X_p=V(\simpegad_q)$ be the set corresponding to the chosen edge. We first construct a triangle packing between $\mathcal{F}\setminus\{X_p\}$ and the selection columns. For each $j<p$, we use the selection column between $X_j$ and $X_{j+1}$ to cover $X_j$, and for each $j\geq p$, we use the selection column between $X_j$ and $X_{j+1}$ to cover $X_{j+1}$. 
    
    Fix one such column $C_j$. In the former case ($j<p$), let $X_j = v_1,\dots,v_r$. By construction of the selection column, every odd-indexed vertex $v_t$ is adjacent to all vertices of the color classes $c_1$ and $c_2$, while every even-indexed vertex $v_t$ is adjacent to all vertices of the color classes $c_3$ and $c_0$. Hence, for each $t\in[r]$, we can build a triangle between $v_t$ and the endpoints of the $(2t-1)$st edge of the cycle (as the endpoints of such edges alternate in colors between $1,2$ and $3,0$). Similarly, for the latter case ($j\geq p$), let $X_{j+1} = v_1, \dots, v_r$.
    Then every odd-indexed vertex $v_t$ is adjacent to all vertices of the color classes $c_2$ and $c_3$, while every even-indexed vertex $v_t$ is adjacent to all vertices of the color classes $c_1$ and $c_0$. Hence, for each $t\in[r]$, we can build a triangle between $v_t$ and the endpoints of the $(2t)$th edge of the cycle (as the endpoints of such edges alternate in colors between $2,3$ and $0,1$). In both cases, we obtain $r$ vertex-disjoint triangles that cover exactly the vertices of the chosen set together with all vertices of $C_j$. Doing this for every selection column yields a triangle packing that covers all vertices of the selection gadget and all vertices in $\mathcal{F}\setminus\{X_p\}$, while leaving the vertices of $X_p$ uncovered inside the selection gadget.

    Finally, for each triangle in this packing, we choose one of the attached cliques of size $d-3$ and add all its vertices to the triangle. Since every such clique is adjacent to all vertices of the corresponding selection column and all vertices of the sets in $\mathcal{F}$, each triangle extends to a clique of size $d$. As the attached $(d-3)$-cliques are pairwise disjoint and their number is exactly $r$ per selection column, this turns the triangle packing above into a $d$-clique partition of the whole selection gadget together with all sets $X_j$ for $j\neq p$.

    We perform the above construction independently for every color gadget, and for every edge gadget corresponding to every pair of colors. It is not hard to see that both inner cuts and simple edge gadgets consume the same number of $(d-1)$-cliques from the left color column of each id gadget, and the same number of $(d-1)$-cliques from the right color column, where the outer cuts complement these selections by consuming the remaining $(d-1)$-cliques. It also holds that  every non-chosen simple edge gadget is absorbed by its selection gadget. Consequently, every vertex of $G$ belongs to exactly one clique of size $d$. Hence $G$ has a $d$-clique partitioning.
\end{proof}

Now we aim to to prove the converse implication.
We first prove some properties that every $d$-clique partitioning of $G$ has. These properties will be useful to show that the selected vertices correspond to a multi-colored clique in $H$.
From now on, we assume that $G$ admits a $d$-clique partitioning, and let $\mathcal{P}$ be such a partition. Let $E(\mathcal{P})$ be the set of all edges in the cliques of $\mathcal{P}$. We start with the following definition:

\begin{definition}
    Given an id gadget $\idgad$ in $G$, we define the \emph{state} of $\idgad$ defined by $\mathcal{P}$, denoted by $s(\idgad)$, as the number of $(d-1)$-cliques in the left color column of $\idgad$ that are contained in cliques of $\mathcal{P}$ together with vertices of the inner cut of this gadget if such an inner cut exists, or with vertices of the edge gadget replacing the inner cut otherwise.
\end{definition}

Now we prove that a selection gadget must pack the vertices of all but exactly one of the simple edge gadgets internally, keeping a single simple edge gadget ``exposed'' to be packed together with id gadgets.

\begin{lemma}\label{lem:lb-selgad}
    Let $\selgad(\mathcal{F})$ be a selection gadget in $G$ with $\mathcal{F} = \{X_1,\dots X_m\}$, where all sets of $\mathcal{F}$ have size $r$. Then there exists a unique value $i\in [m]$ such that for all $j\neq i$, all vertices of $X_j$ belong to cliques of $\mathcal{P}$ that lie completely in the selection gadget, while the vertices of $X_i$ are in cliques that only intersect the selection gadget in $X_i$ itself.
\end{lemma}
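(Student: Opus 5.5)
The plan is to analyse locally how the cliques of $\mathcal{P}$ can cover the vertices of a single selection column $C_j$. Afterwards we count over the path-like chain $X_1, C_1, X_2, \dots, C_{m-1}, X_m$.

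\textbf{Step 1: cliques through a cycle vertex.} Let $c$ be a vertex of $C_j$. Its neighbourhood consists of its two cycle neighbours, some vertices of $X_j\cup X_{j+1}$, and the attached $(d-3)$-cliques. The sets $X_j$ and $X_{j+1}$ are independent, the cycle has length $2r\geq 4$ and so contains no triangle, and the attached cliques contain no vertices of the cycle. Hence a clique of size $d$ containing $c$ contains at most two cycle vertices and at most one vertex of $X_j\cup X_{j+1}$, and the remaining vertices come from the attached cliques. The attached cliques contain only $r(d-3)$ vertices per column. The cycle has $2r$ vertices, and each of them must be covered by a $d$-clique. Counting vertices then forces every such clique to be a cycle edge, plus one vertex of $X_j\cup X_{j+1}$, plus one full $(d-3)$-clique. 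This is the delicate point: I would make it precise by noting that each $d$-clique meeting the cycle uses at most $d-3$ attached vertices, so at least three of its vertices lie outside the attached cliques, and of these at most two are cycle vertices and at most one lies in $X_j\cup X_{j+1}$. Equality must then hold everywhere, using that the attached cliques are shared only as described.

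\textbf{Step 2: the used edges form a matching, and it is all-horizontal or all-vertical.} By Step 1, the cliques through $C_j$ use $r$ cycle edges forming a perfect matching of the cycle. An even cycle has exactly two perfect matchings, the horizontal one and the vertical one. By construction, the third vertex of each clique must be adjacent to both endpoints of its edge. The endpoints of a horizontal edge have colours $\{c_0,c_1\}$ or $\{c_2,c_3\}$, and such pairs have common neighbours only in $X_j$; the vertical edges similarly have common neighbours only in $X_{j+1}$. I will check this against the colour adjacency rules: odd $v$ are adjacent to $c_1,c_2$, even $v$ to $c_3,c_0$, odd $w$ to $c_2,c_3$, even $w$ to $c_1,c_0$. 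So column $C_j$ absorbs exactly $r$ vertices, either all from $X_j$ or all from $X_{j+1}$, and therefore covers that whole set.

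\textbf{Step 3: counting.} There are $m-1$ columns, each absorbing a whole set, and by Step 2 no set is absorbed twice, since its vertices are used once. So exactly $m-1$ of the $m$ sets are absorbed, which gives the unique exceptional index $i$. Moreover, $C_j$ absorbs $X_{j+1}$ exactly when it does not absorb $X_j$, which is consistent with the left/right pattern. For every $j\neq i$, the vertices of $X_j$ lie in cliques inside the gadget, consisting of cycle vertices, attached cliques and $X_j$. For $X_i$, the cycle vertices and the attached cliques are all consumed by the other columns' cliques. Every attached-clique vertex is used, since there are $r$ cliques per column and exactly $r$ $d$-cliques per column. Hence the cliques containing vertices of $X_i$ meet the gadget only in $X_i$. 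I expect Step 1, the counting argument excluding mixed cliques, to be the main obstacle.
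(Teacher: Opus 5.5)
Your proposal is correct and follows essentially the paper's argument. A local analysis shows that every clique through $C_j$ consists of a cycle edge, one vertex of $X_j\cup X_{j+1}$ and a full attached $(d-3)$-clique, so each column realizes one of the two perfect matchings and absorbs exactly one neighbouring set. The chain structure then leaves exactly one free set, and your observation that all attached cliques are consumed makes explicit the final claim about $X_i$, which the paper leaves implicit. One harmless slip: by the stated adjacency rules, the endpoints of the vertical edges ($c_1c_2$, $c_3c_0$) have their common neighbours in $X_j$, and those of the horizontal edges ($c_0c_1$, $c_2c_3$) in $X_{j+1}$; this only swaps the names and does not affect your argument.
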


\begin{proof}
    Fix an arbitrary selection column $C_j$ for $j\in[m-1]$.
    First note that no clique of $\mathcal{P}$ can contain a single vertex of $C_j$, since each vertex of $C_j$ is only adjacent to the $(d-3)$-cliques, to two other vertices of $C_j$, and to the vertices of $X_j$ and $X_{j+1}$ that are pairwise non-adjacent---as they induce an independent set by definition.
    On the other hand, since $C_j$ induces a simple cycle in $G$, each clique of $\mathcal{P}$ contains at most two vertices of $C_j$ that appear consecutively on the cycle. Therefore, the intersection of $E(\mathcal{P})$ with $C_j$ must be one of the two perfect matchings of $C_j$, either the ``horizontal'' one or the ``vertical'' one.

    By construction of the gadget, vertices of $X_j$ are only adjacent to endpoints of the edges of the horizontal matching, while vertices of $X_{j+1}$ are only adjacent to endpoints of the edges of the vertical matching. 
    Therefore:
    If $\mathcal{P}$ packs the edges of the horizontal matching, then all cliques covering $C_j$ use vertices of $X_j$ (and none of $X_{j+1}$),
    and if $\mathcal{P}$  packs the edges of the vertical matching, then all cliques covering $C_j$ use vertices of $X_{j+1}$ (and none of $X_j$). Hence, exactly $r$ vertices are taken from exactly one of the two adjacent sets consuming all vertices of $C_j$ and all vertices of the selected set, while no vertex of the other set is used.

    Therefore, there must exist a unique index $i\in [m]$ such that for all $j<i$, the columns $C_j$ cover all vertices of $X_j$ and for all $i\leq j \leq m-1$, the columns $C_j$ cover all vertices of $X_{j+1}$. Hence, all vertices of $X_i$ are not covered by any column, and thus must be covered by cliques that only intersect the selection gadget in $X_i$ itself.
\end{proof}

The following lemma states that all id gadgets of the same color gadget have the same state. Intuitively, this holds since the $(d-1)$-cliques of a color column that are not consumed by the inner cut vertices, must be consumed by the outer cut vertices.

\begin{lemma}\label{lem:lb-same-state}
    Let $\idgad, \idgad'$ be id gadgets in the same color gadget. Then it holds that $s(\idgad) = s(\idgad')$.
\end{lemma}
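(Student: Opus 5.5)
Since the id gadgets of a color gadget are arranged cyclically (the last outer cut joins the last and first columns), it suffices to show $s(\idgad^q)=s(\idgad^{q+1})$ for consecutive id gadgets; chaining these equalities around the gadget gives the claim. The plan is a counting argument on each color column, based on how the $(d-1)$-cliques of a column can be completed to $d$-cliques in $\mathcal{P}$.

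The first step is a local structural claim. Each $(d-1)$-clique $K$ of a color column lies in a clique of $\mathcal{P}$ that consists of $K$ together with exactly one vertex of an adjacent independent set. Its possible partners are the adjacent inner cut (or the simple edge gadget vertices replacing it) and the adjacent outer cut. To prove this:
\begin{itemize}
\item Different $(d-1)$-cliques of a column are non-adjacent, and so are $(d-1)$-cliques of different columns or different color gadgets.
\item Hence the clique of $\mathcal{P}$ containing a vertex of $K$ can contain only vertices of $K$ and of independent sets adjacent to the column, and at most one vertex from the union of those sets, since they are mutually non-adjacent.
\item A clique of size exactly $d$ is therefore forced to be $K$ plus one such vertex. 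This also requires that the whole of $K$ lies in that single clique, because otherwise the remaining vertices of $K$ would have to be covered by cliques of size less than $d$.
\end{itemize}

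The second step is to count cut vertices. Every vertex of an outer cut is adjacent only to the two columns beside it. By the first step it must be packed with exactly one $(d-1)$-clique from one of those two columns, because it is adjacent to nothing else and cannot form a $d$-clique otherwise. For an inner cut the same reasoning shows that each of its $n$ vertices takes one clique from the left or the right column of its id gadget. Replaced inner cuts need separate care. There, \cref{lem:lb-selgad} shows that exactly one simple edge gadget is exposed. Its $n$ vertices on the side of this color gadget are each packed with a clique from one of the two columns of the id gadget. The vertices of the other simple edge gadgets are used entirely inside the selection gadget and consume no column cliques. The edge-gadget side therefore behaves exactly like an inner cut of size $n$.

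The final step is the bookkeeping. Let $s=s(\idgad^q)$. Then the inner cut of $\idgad^q$ consumes $s$ cliques of the left column and $n-s$ cliques of the right column. Each column has $n$ cliques, and by the first step each is consumed exactly once. The right column of $\idgad^q$ therefore has $s$ cliques left over, which must go to the outer cut $O$ between $\idgad^q$ and $\idgad^{q+1}$. Since $|O|=n$, the cut $O$ sends $n-s$ vertices to the left column of $\idgad^{q+1}$. That column then has exactly $s$ cliques remaining for its inner cut or edge gadget, so $s(\idgad^{q+1})=s$.

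The main obstacle is the first step in the edge-gadget case. Simple edge gadget vertices are also adjacent to selection columns and to $(d-3)$-cliques, so I must rule out a column clique being combined with those vertices in some other way. The argument will rely on the mutual non-adjacency between color columns and selection-gadget vertices. Combined with \cref{lem:lb-selgad}, this forces the exposed gadget to attach only to color columns.
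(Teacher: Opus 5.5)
Your proposal is correct and follows the paper's proof: you reduce to consecutive id gadgets and use the same counting through the outer cut $O^q$. There, $s$ leftover cliques in the right column of $\idgad^q$ force $n-s$ vertices of $O^q$ onto the left column of $\idgad^{q+1}$, leaving exactly $s$ for its inner cut or edge gadget. Your local claim that each column $(d-1)$-clique is completed by exactly one vertex of an adjacent independent set, together with your appeal to \cref{lem:lb-selgad} to make the exposed simple edge gadget act like an inner cut of size $n$, makes explicit what the paper's proof uses implicitly.
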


\begin{proof}
    We prove the claim for two consecutive id gadgets of the same color gadget. The proof of the lemma then follows follows by induction.
    Let $\idgad^q$ and $\idgad^{q+1}$ be consecutive id gadgets in one fixed color gadget, and let $O^q$ be the outer cut between them. Let $s \coloneq s(\idgad^q)$.
    By definition of the state, exactly $s$ $(d-1)$-cliques of the left color column of $\idgad^q$ are completed together with the inner cut or with the replacing edge gadget. Hence exactly $n-s$ $(d-1)$-cliques of the right color column of $\idgad^q$ are completed with that inner cut or edge gadget respectively, and therefore the remaining $s$ $(d-1)$-cliques of that right column must be completed with vertices of the outer cut $O^q$.

    The set $O^q$ is an independent set of size $n$, and each vertex of $O^q$ is adjacent to all vertices of the right color columns of $\idgad^q$ and the left color columns of $\idgad^{q+1}$. Thus every clique of $\mathcal{P}$ containing a vertex of $O^q$ contains exactly one vertex of $O^q$ and one $(d-1)$-clique from one of its two neighboring columns. Therefore, exactly $n-s$ vertices of $O^q$ are with $(d-1)$-cliques of the left column of $\idgad^{q+1}$.

    Consequently, exactly $n-s$ $(d-1)$-cliques of the left color column of $\idgad^{q+1}$ are consumed by $O^q$. The remaining $s$ $(d-1)$-cliques of that left column must be completed with the inner cut of $\idgad^{q+1}$ or with its replacing edge gadget. By definition of the state of an id gadget, it follows that $s(\idgad^{q+1}) = s$, which proves the claim.
\end{proof}

Finally, we show that the edge gadgets ensure that the selected vertices in different color gadgets are adjacent in $H$. Again, this holds since a single simple edge gadget must be packed together with the id gadgets, and these simple edge gadgets ensure that the selected vertices in the two color gadgets are adjacent in $H$.

\begin{lemma}\label{lem:lb-edge-gadget}
    Let $\idgad_{C_1}^q$ and $\idgad_{C_2}^q$ be two id gadgets with the edge gadget corresponding to the pair of colors $L_q=\{C_1,C_2\}$ added between them. Let $s_1 = s(\idgad_{C_1}^q)$ and $s_2 = s(\idgad_{C_2}^q)$. Then the edge $\{u_{C_1}^{s_1}, u_{C_2}^{s_2}\}$ exists in $H$.
\end{lemma}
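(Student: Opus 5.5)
By \cref{lem:lb-selgad}, applied to the selection gadget of the edge gadget for $L_q$, there is a unique index $p\in[m]$ such that every simple edge gadget $\simpegad_j$ with $j\neq p$ is covered by cliques lying entirely inside the selection gadget. The vertices of $X_p=V(\simpegad_p)$, by contrast, lie in cliques meeting the selection gadget only in $X_p$. Let $\simpegad_p$ correspond to the edge $\{u_{C_1}^{j_1},u_{C_2}^{j_2}\}$ of $H$. I will show $s_1=j_1$ and $s_2=j_2$, which gives the claim directly.

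\textbf{Step 1 (local structure of cliques through $X_p$).} Fix a vertex $v_t\in X_p$ adjacent to $\idgad_{C_1}^q$. Its neighbours outside the selection gadget are exactly the vertices of one color column of $\idgad_{C_1}^q$: the first column if $t\le j_1$, the second otherwise. Its remaining neighbours are the selection columns and the $(d-3)$-cliques, all inside the selection gadget. The clique containing $v_t$ meets the selection gadget only in $X_p$, and $X_p$ is independent. Hence that clique consists of $v_t$ together with $d-1$ vertices of a single color column. Distinct $(d-1)$-cliques of a column are non-adjacent, since they lie in different unit rows. So the clique is $v_t$ together with exactly one $(d-1)$-clique of that column. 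The same argument applies to the vertices $w_t$ and $\idgad_{C_2}^q$.

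\textbf{Step 2 (counting).} Since the inner cut has been removed, only the simple edge gadgets can be packed together with color-column cliques of $\idgad_{C_1}^q$ from the inner side. The non-selected simple edge gadgets are consumed entirely inside the selection gadget, so they contribute nothing here. By Step 1, each $v_t$ consumes exactly one $(d-1)$-clique: one from the first column if $t\le j_1$, one from the second column otherwise. By the definition of the state, $s_1$ equals the number of $(d-1)$-cliques of the left column packed with vertices of the edge gadget. Therefore $s_1=j_1$, and symmetrically $s_2=j_2$.

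The main obstacle is Step 1. It requires checking carefully that no clique can mix a vertex of $X_p$ with vertices of both an id gadget and something else. The possible mixtures are an outer-cut vertex, a vertex of another simple edge gadget, or a $(d-3)$-clique of the selection gadget. Outer-cut vertices are not adjacent to edge-gadget vertices. Other simple edge gadgets lie in the same independent set. The $(d-3)$-cliques are excluded by \cref{lem:lb-selgad}, and for $d=3$ they are absent anyway. Finally, the cliques of one color column are pairwise non-adjacent. With these checks done, the counting in Step 2 is immediate.
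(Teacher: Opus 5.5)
Your proof is correct and follows essentially the same route as the paper. You invoke \cref{lem:lb-selgad} to isolate the unique exposed simple edge gadget $X_p$, then observe that each $v_t$ (resp.\ $w_t$) is completed by exactly one $(d-1)$-clique of the color column it is attached to, giving $s_1=j_1$ and $s_2=j_2$. Your Step~1 spells out the local checks (independence of $X_p$, no adjacency to outer cuts, pairwise non-adjacent $(d-1)$-cliques within a column) that the paper's proof leaves implicit.
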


\begin{proof}
    Let $i_1 \coloneq s(\idgad_{C_1}^q)$ and $i_2 \coloneq s(\idgad_{C_2}^q)$. We show that $\{u_{C_1}^{i_1},u_{C_2}^{i_2}\}\in E(H)$.

    Let $\mathcal{F}=\{X_1,\dots,X_m\}$ be the family of vertex sets of the simple edge gadgets contained in the edge gadget between $\idgad_{C_1}^q$ and $\idgad_{C_2}^q$. By the previous lemma on selection gadgets, there exists a unique index $p\in [m]$ such that all vertices of $X_j$ for $j\neq p$ are covered by cliques lying completely inside the selection gadget, while the vertices of $X_p$ belong to cliques that intersect the selection gadget only in $X_p$. Hence the only simple edge gadget that can interact with the two id gadgets is the gadget corresponding to $X_p$.

    Let $X_p=V(\simpegad_p)$, and let $\{u_{C_1}^{j_1},u_{C_2}^{j_2}\}$ be the edge of $H$ represented by $\simpegad_p$. By construction of the simple edge gadget, its vertices are partitioned into vertices $v_1,\dots,v_n$ adjacent to $\idgad_{C_1}^q$ and vertices $w_1,\dots,w_n$ adjacent to $\idgad_{C_2}^q$, where $v_1,\dots,v_{j_1}$ are adjacent to all vertices of the first color column of $\idgad_{C_1}^q$ and $v_{j_1+1},\dots,v_n$ are adjacent to all vertices of the second color column, and analogously $w_1,\dots,w_{j_2}$ are adjacent to the first color column of $\idgad_{C_2}^q$ while $w_{j_2+1},\dots,w_n$ are adjacent to its second color column.

    Since the vertices of $X_p$ are not covered inside the selection gadget, every vertex $v_t$ must be packed with one of the two color columns of $\idgad_{C_1}^q$, and every vertex $w_t$ must be packed with one of the two color columns of $\idgad_{C_2}^q$. Therefore exactly $j_1$ $(d-1)$-cliques of the left column of $\idgad_{C_1}^q$ are completed using vertices of $X_p$, and the remaining $n-j_1$ such cliques are completed on the right side. By the definition of the state of an id gadget, this means that $s(\idgad_{C_1}^q)=j_1$. The same argument on the $C_2$-side yields $s(\idgad_{C_2}^q)=j_2$.

    Consequently, $\Big\{u_{C_1}^{s(\idgad_{C_1}^q)},u_{C_2}^{s(\idgad_{C_2}^q)}\Big\}=\big\{u_{C_1}^{j_1},u_{C_2}^{j_2}\big\}$ is exactly the edge of $H$ corresponding to the simple edge gadget $\simpegad_p$, and hence, the two states correspond to an edge of $H$.
\end{proof}

Combining all the properties above, we get the following:

\begin{lemma}\label{lem:lb-clq-if-prt}
    If $G$ has a $d$-clique partitioning $\mathcal{P}$, then $H$ has a multi-colored clique.
\end{lemma}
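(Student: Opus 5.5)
The plan is to read off a candidate clique from the states of the id gadgets and then verify it with the three preceding lemmas. Fix the $d$-clique partitioning $\mathcal{P}$ of $G$. For each color $C\in R$, Lemma~\ref{lem:lb-same-state} says that all id gadgets $\idgad_C^1,\dots,\idgad_C^\ell$ of the color gadget $\colgad_C$ have the same state under $\mathcal{P}$. We call this common value $i_C$ and select the vertex $u_C^{i_C}$ from the color class $V_C$.

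Before the selection makes sense, we must check that $i_C\in[n]$. Every id gadget of $C$ lies in some column $q$ with $C\in L_q$, because every color belongs to at least one pair of $\mathscr{L}$ once $r\geq 2$. In the proof of Lemma~\ref{lem:lb-edge-gadget}, the state of that id gadget equals the index $j_1$ (or $j_2$) of the exposed simple edge gadget. That index is the id of an actual vertex of $V_C$, so it lies in $[n]$. The degenerate case $r=1$ is trivial and can be handled separately, since then any single vertex is a multi-colored clique.

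Define $Q\coloneq\{u_C^{i_C} : C\in R\}$. By construction, $Q$ contains exactly one vertex from each color class. It remains to show that $Q$ is a clique. Take any two distinct colors $C_1,C_2$. Then $\{C_1,C_2\}=L_q$ for some $q\in[\ell]$, and the edge gadget for $L_q$ sits between $\idgad_{C_1}^q$ and $\idgad_{C_2}^q$. Lemma~\ref{lem:lb-edge-gadget} gives the edge $\{u_{C_1}^{s(\idgad_{C_1}^q)},u_{C_2}^{s(\idgad_{C_2}^q)}\}\in E(H)$. By the choice of $i_C$, this edge is exactly $\{u_{C_1}^{i_{C_1}},u_{C_2}^{i_{C_2}}\}$. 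Since the pair was arbitrary, $Q$ is a multi-colored clique.

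The substantive part of the argument is already in the earlier lemmas. The only points to watch here are that Lemma~\ref{lem:lb-same-state} also applies across id gadgets whose inner cuts were replaced by edge gadgets, and that the state is well defined and lies in $[n]$. Both hold because the state is defined uniformly for either kind of id gadget, and the proof of Lemma~\ref{lem:lb-same-state} only uses the outer cuts, which are never removed.
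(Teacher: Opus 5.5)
Your proof is correct and follows the paper's argument: the paper also takes the common state $s_C$ of all id gadgets of the color gadget of $C$ from Lemma~\ref{lem:lb-same-state}, selects $u_C^{s_C}$, and checks each pair $L_q=\{C_1,C_2\}$ via Lemma~\ref{lem:lb-edge-gadget}. Your extra check that $s_C\in[n]$ is a detail the paper leaves implicit, but it only needs that \emph{some} id gadget of $C$ lies in a column $q$ with $C\in L_q$ (not every one, as your sentence literally says), after which Lemma~\ref{lem:lb-same-state} carries that value to the other id gadgets.
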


\begin{proof}
    We construct a multi-colored clique in $H$ from $\mathcal{P}$.
    For each color $C\in R$, consider the color gadget $\colgad_C$. By \cref{lem:lb-same-state}, all id gadgets contained in $\colgad_C$ have the same state. We denote this common value by $s_C$. We now select in the color class $V_C$ the vertex $u_C^{s_C}$.
    Let $S \coloneq \{u_C^{s_C} \mid C\in R\}$. We claim that that $S$ is a clique in $H$.
    
    Let $C_1,C_2\in R$ be two distinct colors, and consider the index $q\in[\ell]$ such that $L_q=\{C_1,C_2\}$. Since the two id gadgets $\idgad_{C_1}^q$ and $\idgad_{C_2}^q$ linked by the corresponding edge gadget, it follows then from \cref{lem:lb-edge-gadget} that $\Big\{u_{C_1}^{s(\idgad_{C_1}^q)}, u_{C_2}^{s(\idgad_{C_2}^q)}\Big\}\in E(H)$. Since $C_1$ and $C_2$ were chosen arbitrarily, it follows that $S$ is a clique in $H$.
\end{proof}

\subparagraph{Bounded clique-width graph.}

Now we describe the graph $\hat{G}$. For each id gadget $\idgad_C^q$ in $G$, and each $(d-1)$-clique of this gadget, let us fix a bijective assignment from $C$ to the vertices of this clique. We call the value assigned to a vertex by this assignment its \emph{color}. 
Let us denote the left color column of an id gadget $\idgad_C^q$ by $Q_C^q(L)$ and its right color column by $Q_C^q(R)$. For $X\in \{L, R\}$ and $q\in [\ell]$, we call the collection for all $(d-1)$-cliques of all color columns $Q_C^q(X)$ for all colors $C\in R$ the $(q, X)$-\emph{grid column}. For each vertex of an inner cut, out cut or simple edge gadget attached to a color column $Q_C^q(X)$, we add all edges between this vertex and all vertices of the $(q,X)$-grid column whose colors belong to $C$. We denote the resulting graph by $\hat{G}$.

\begin{lemma}
    The graph $\hat{G}$ has a $d$-clique partitioning if and only if $G$ has one.
\end{lemma}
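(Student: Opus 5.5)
Since $\hat{G}$ is obtained from $G$ only by adding edges, one direction is immediate: every $d$-clique of $G$ is a $d$-clique of $\hat{G}$, so a $d$-clique partitioning of $G$ is also one of $\hat{G}$. For the converse, I will take a $d$-clique partitioning $\hat{\mathcal{P}}$ of $\hat{G}$ and show that none of its cliques uses an edge of $E(\hat{G})\setminus E(G)$. Then every clique of $\hat{\mathcal{P}}$ is already a clique of $G$, and $\hat{\mathcal{P}}$ is a $d$-clique partitioning of $G$.

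\textbf{Step 1: neighbourhoods in $\hat{G}$.} I call the vertices of inner cuts, outer cuts and simple edge gadgets \emph{attachment vertices}. Every new edge joins an attachment vertex $v$, attached to a column $Q_C^q(X)$, to a vertex of the $(q,X)$-grid column whose color lies in $C$. The vertices of the grid columns have no other new neighbours. So in $\hat{G}$ a color-column vertex $u$ is adjacent exactly to:
\begin{itemize}
\item the other $d-2$ vertices of its own $(d-1)$-clique, and
\item attachment vertices.
\end{itemize}
I also need that attachment vertices are pairwise non-adjacent in $\hat{G}$. This holds because the cuts and simple edge gadgets are independent sets, no new edge joins two attachment vertices, and attachment vertices of different gadgets were never adjacent in $G$. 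This is the point I would double-check most carefully against the construction, in particular:
\begin{itemize}
\item outer cuts, including the wrap-around cut between the last and first column;
\item simple edge gadgets sitting in the same grid column as cuts of other color gadgets.
\end{itemize}

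\textbf{Step 2: cliques through a color column.} Let $K\in\hat{\mathcal{P}}$ contain a color-column vertex $u$. Since $|K|=d$ and $u$'s own $(d-1)$-clique has only $d-1$ vertices, $K$ must contain an attachment vertex $v$. By Step 1, $K$ contains exactly one attachment vertex. Vertices of distinct $(d-1)$-cliques of a grid column are non-adjacent, so the other $d-1$ vertices of $K$ form exactly one whole $(d-1)$-clique of some column $Q_{C'}^q(X)$.

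These vertices carry all colors of $C'$ (the color assignment is bijective), and they must all be adjacent to $v$. If $v$ is attached to $Q_{C}^q(X)$, this forces $C'\subseteq C$. Both sets have size $d-1$, so $C'=C$. Hence $K$ consists of $v$ together with a $(d-1)$-clique of the very column $v$ is attached to, and all its edges are edges of $G$.

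\textbf{Step 3: remaining cliques.} Consider now a clique of $\hat{\mathcal{P}}$ containing no color-column vertex. It lies inside the selection-gadget parts, meaning the selection columns, the $(d-3)$-cliques and the simple edge gadgets. Between these vertices no edges were added, so the clique is a clique of $G$ as well.

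Hence $\hat{\mathcal{P}}$ is a $d$-clique partitioning of $G$. The main obstacle I expect is not the counting but the bookkeeping in Step 1, namely verifying carefully that attachment vertices stay pairwise non-adjacent, and that color-column vertices gain no new neighbours other than attachment vertices. Once that is in place, the argument in Step 2 is the short color argument $C'\subseteq C \Rightarrow C'=C$.
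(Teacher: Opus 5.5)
Your proposal is correct and follows essentially the same route as the paper: the supergraph direction is immediate, and for the converse any $d$-clique of $\hat{G}$ meeting a color column must consist of exactly one attachment vertex plus a whole $(d-1)$-clique from the same grid column, whose color set $C'$ must then equal the color $C$ of the column that vertex is attached to (the paper phrases this as $C'$ containing a color outside $C$, since $|C|=|C'|$). The adjacency facts you flag for checking in Step 1 do hold in the construction: every added edge has an endpoint in a color column, each attachment vertex is attached to at most one column per grid column, and cuts and simple edge gadgets are independent sets adjacent only to color columns, selection columns and $(d-3)$-cliques.
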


\begin{proof}
    Since $\hat{G}$ is a supergraph of $G$ with the same set of vertices, if $G$ has a $d$-clique partitioning, then $\hat{G}$ also has one.
    For the other direction, we show that $G$ and $\hat{G}$ have the same set of $d$-cliques. Let $C$ be a $d$-clique in $\hat{G}$. If $C$ does not intersect a color column, then $C$ is clearly a $d$-clique in $G$ as well. 
    
    Otherwise, we claim that $C$ consists of $d-1$ vertices of color columns and a single vertex that does not belong to a color column. This holds since the vertices of color columns are partitioned into $(d-1)$-cliques without edges between them, and any two vertices that do not belong to color columns and adjacent to color column vertices are not adjacent themselves. So the only case a $d$-clique of $\hat{G}$ doesn't belong to $G$ is if it contains a vertex $w$ of an inner cut, out cut or simple edge gadget attached to a color column $Q_C^q(X)$ together with $(d-1)$-clique of a different color column in the same grid column $\idgad_{C'}^q$ with $C\neq C'$. However, since $|C|=|C'|=d$, and $C\neq C'$, $C'$ must contain a color that does not belong to $C$, and hence, the vertex $w$ is not adjacent to all vertices of that $(d-1)$-clique, contradicting the assumption that $C$ is a clique. Therefore, every $d$-clique of $\hat{G}$ is a $d$-clique of $G$, and thus if $\hat{G}$ has a $d$-clique partitioning, then $G$ also has one.
\end{proof}

\begin{lemma}\label{lem:lb-linear-cw}
    The graph $\hat{G}$ has linear clique-width at most $\Oh(k)$. Moreover, given the instance $H$, the graph $\hat{G}$ together with a linear $\Oh(k)$-clique-expression of $\hat{G}$ can be constructed in polynomial time.
\end{lemma}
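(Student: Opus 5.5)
We build a linear expression that processes the grid column by column, from the $(1,L)$-grid column to the $(\ell,R)$-grid column, and add the selection gadgets and cuts in between. The label set is fixed at the start. There are $k$ \emph{color labels} $c_1,\dots,c_k$, one per element of $[k]$. A second copy $c'_1,\dots,c'_k$ marks vertices of the previous grid column that still need edges to the next cut. We add a constant number of labels for selection-gadget vertices, namely one per class $c_0,\dots,c_3$ of the current selection column plus ``previous $X_i$'' labels. There are $\Oh(1)$ labels for the current cut vertex, and one \emph{dead} label. Since the grid column vertices are identified with elements of $[k]$ through the color of each vertex, the $(q,X)$-grid column needs only $k$ labels. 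The key point is that in $\hat G$ a cut vertex attached to $Q_C^q(X)$ is adjacent to \emph{all} vertices of the grid column whose color lies in $C$. So its neighborhood in the grid column is a union of label classes $\{c_i : i\in C\}$, and this is exactly what makes $\Oh(k)$ labels suffice.

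The order of operations is as follows. Suppose the $(q,X)$-grid column is introduced, with each $(d-1)$-clique introduced vertex by vertex using temporary labels and internal joins, and then relabeled to the $c_i$. Those vertices have no neighbors except cut vertices adjacent to this grid column, namely the cut before it and the cut after it. Before introducing the column, the previous grid column carries labels $c'_i$. For each cut vertex between the two grid columns (an inner or outer cut vertex, or a simple edge gadget vertex $v_t$ or $w_t$), we introduce it with a fresh label $z$. We join $z$ to $c'_i$ for $i\in C$ if it is attached to the left column, or to $c_i$ for $i\in C$ if it is attached to the right column. We then relabel it to dead, or to a selection label if it belongs to a simple edge gadget. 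Once the cut is finished, the previous column is relabeled $c'_i\to$ dead, and then $c_i\to c'_i$. Vertices of the wrap-around outer cut are adjacent to both the first and the last column. We handle this with $k$ extra labels that keep the first grid column alive (each vertex labeled by color) until the end, still $\Oh(k)$.

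For the edge gadget between the two columns of $\idgad^q$, we interleave the simple edge gadgets with the selection columns in the order $X_1, C_1, X_2, C_2,\dots$. Each $X_i$ vertex is handled as above regarding the grid columns. Odd-indexed and even-indexed vertices get separate labels, so $X_i$ is kept under two labels, and only the current and next $X$ sets are ever alive. For $C_i$, we introduce the $2r$ cycle vertices in path order with $\Oh(1)$ labels (first vertex, current end, interior dead), joining each vertex to the parity labels of $X_i$ according to its class mod $4$. Each vertex also gets a class label $c_0,\dots,c_3$ so that it can later be joined to the parity labels of $X_{i+1}$. The $(d-3)$-cliques are adjacent to all selection columns and all $\mathcal F$ sets of this edge gadget. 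We introduce them first, under a single label, and join that label to every new selection or $\mathcal F$ vertex as it is introduced. This needs $\Oh(1)$ labels per edge gadget, and the edge gadgets are disjoint in time, so the labels can be reused.

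The main obstacle I expect is keeping the expression linear while every introduced vertex's neighborhood among already-introduced vertices is a union of label classes. This is true for cut vertices because of the $\hat G$ modification. For cycle vertices it holds because of the redundant mod-$4$ edges. For the cliques it holds by introducing them consecutively. Each step is a single vertex introduction followed by joins and relabels, so the tree is a caterpillar. The total label count is $2k+\Oh(k)+\Oh(1)=\Oh(k)$. The construction is clearly polynomial in $|V(\hat G)|$, which is polynomial in $|H|$ because $d$ is constant.
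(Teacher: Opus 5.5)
Your proposal is correct and takes essentially the same route as the paper. Both sweep left to right over the grid columns, give each live grid column one label per element of $[k]$, keep the first column alive under $k$ extra labels for the wrap-around outer cut, and handle the selection gadgets with $\Oh(1)$ labels via the mod-$4$ classes and the parity classes; your interleaving $X_1,C_1,X_2,\dots$ instead of the paper's $C_1,X_1,C_2,X_2,\dots$ makes no difference.

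One small correction: each selection column has its own $r$ cliques of size $d-3$, adjacent to that column but not to the other selection columns. So rather than introducing all of them up front under one shared label, introduce each column's cliques alongside that column under a separate label, as the paper does with $e,e'$; this still costs only $\Oh(1)$ labels.
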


\begin{proof}
    Since the syntax tree of a linear clique-expression has a caterpillar structure, where all inner nodes are union nodes with a single introduce child, we consider a linear clique expression as a sequence of operations, each adding a vertex of a specific label to the graph, relabeling some label, or adding edges between labels. We will consider each operation of this sequence a ``timestamp'', and denote the graph constructed after the first $t$ operations by $\hat{G}_t$.

    We will preserve labels that will be used ot specific purposes. We first preserve the first $2k$ labels, where the first $k$ labels $\ell_1, \dots, \ell_k$ are used to construct the left color columns of the id gadgets, and the following $k$ labels $r_1, \dots, r_k$. are used to construct the right color columns of the id gadgets. 
    We preserve the following $k$ labels $\ell^0_1, \dots, \ell^0_k$ for the very first color column of each color gadget. We also preserve $d-1$ labels $c_1, \dots c_{d-1}$ for creating the cliques of these color columns.

    We preserve $8$ labels $z_0, \dots, z_3, z'_0, \dots, z'_3$ and three extra labels $y_0$, $y$ and $y'$ for creating the selection columns of the selection gadgets, and the labels $e$ and $e'$ for the $(d-3)$-cliques attached to them. 
    We preserve the labels $s$ and $s'$ for creating independent sets and cliques. 
    We also preserve a ``forget'' label $f$. Vertices of label $f$ at a timestamp $t$ already have all their neighbors in $\hat{G}$ adjacent to them in $\hat{G}_t$.
    
    When we say we add an independent set or a clique of size $r$, we mean that we add $r$ vertices of label $s$, where in the case of a clique, we add each vertex with label $s'$, we join it with label $s$ and then relabel it to the label $s$. When we use this operation, we assume that no other vertex has label $s$ at that timestamp. After applying this operation, we will join the created independent set or clique with some other labels of $G_t$, and then relabel all its vertices, freeing both labels $t$ and $t'$. Similarly, when we say we create a $(d-1)$-clique of an id gadget $\idgad_{C}^q$, we mean that we add $d-1$ vertices of labels $c_1,\dots c_{d-1}$, we add all edges between all these labels, and then injectively relabel these labels to the labels $\{\ell^0_c\colon c\in C\}$ if this is the first color column of the color gadget of color $C$, to $\{\ell_c\colon c\in C\}$ if this is a left color column of an id gadget of color $C$, or to $\{r_c\colon c\in C\}$ if this is a right color column of an id gadget of color $C$.

    Now we describe the construction of $G$ step by step. The construction will be divided into phases, where in each phase we create a grid column. We preserve the following two invariants:
    
    \begin{invariant}
        At the end of the $q$th phase for $q\in[\ell]$, the graph $\hat{G}_t$ consists of the first $q$ id gadgets of each color gadget, together with the out cuts and edge gadgets between them. Each vertex of the first grid column with color $c$ has label $\ell^0_c$, and each vertices of the last grid column with color $c$ has label $r_c$. All other vertices have label $f$.
    \end{invariant}

    \begin{invariant}
        At each timestamp, there exists at most one grid column with vertices of labels $\{\ell_c\colon c\in C\}$ and at most one grid column with vertices of labels $\{r_c\colon c\in C\}$ for each color $C$.
    \end{invariant}

    The latter invariant ensures that we can add edges between the grid columns and the inner cuts, outer cuts and simple edge gadgets attached to them by a single join operation per gadget, without affecting the other grid columns, where the other grid columns either are already forgotten, or not constructed yet.

    Now we describe each phase. The phase starts by constructing the left color columns of the $q$-th id gadgets of each color gadget, by creating $(d-1)$-cliques one by one. After that, if this is not the first phase, we create the outer cut between the $q$-th and the $(q-1)$-th id gadgets of each color gadget corresponding to a color $C$, by adding an independent set of size $n$ and joining it with all labels $\ell_c$ and $r_c$ for $c \in C$. After that we forget all label $r_c$ for all $c\in[k]$, and then we create the right color columns of the $q$-th id gadgets of each color gadget, by creating $(d-1)$-cliques one by one.
    
    Finally, we create the edge gadget corresponding to $L_q$, as follows: we will create the simple edge gadgets one by one. For each we will ensure that the selection columns preceding and following it (if exist) are already created, where a vertex of the selection column preceding with color $i$ it will be assigned the label $z_i$, and a vertex of the selection column following it with color $i$ will be assigned the label $z'_i$. Remember that the color assigned to a vertex of a selection column is defined by its id on the cycle modulo $4$.
    
    The process begins by relabeling $z'_i$ to $z_i$ for all $i\in[4]$ and $e'$ to $e$, since if this was not the first selection gadget, then the selection column preceding the current simple edge gadget is the selection column following the previous simple edge gadget. After that we create the selection column following the current simple edge gadget, if exists, by creating the $(d-3)$-cliques and a cycle on $2r$ vertices and assigning labels $z'_0,\dots z'_3$ as follows:
    in each step we add a vertex of label $y'$, make it adjacent to label $y$, relabel $y$ to $z'_i$ (except for the very first vertex, we relabel it to $y_0$) for the appropriate $i$, and then we relabel $y'$ to $y$. Finally, for the last vertex, we join it to $y_0$ and relabel both $y_0$ and $y$ to $z'_1$ and $z'_0$ respectively. After that we create the $r$ $(d-1)$-cliques using the labels $s'$ and $s$, relabeling each to $e'$ after creating. After that we make these cliques adjacent to all vertices of the selection column by joining $e$ with $z'_i$ for all $i\in[4]$.

    Then we create the simple edge gadget consisting of four independent sets one by one, the definition of a simple id gadget, where for each we create an independent set of the appropriate size, we make it adjacent to all corresponding colors of the corresponding grid columns, selection columns preceding and following it and the $(d-3)$-cliques attached to them, and then we forget its labels (the vertices of this independent set must have label $s$ at this point). After we create a simple edge gadget we forget the labels of the preceding selection column and its $(d-3)$-cliques.

    Finally, after creating the last grid column, we create the outer cuts between the last id gadgets and the first id gadgets of each color gadget, by creating an independent set of size $n$ and joining it with all labels $\ell^0_c$ for $c\in C$ for each color $C$. This concludes the construction of $\hat{G}$.

    It is not hard to see that the resulting graph is exactly the graph $\hat{G}$. In particular, all colored vertices are assigned edges defined by their colors, and not by their ids, and therefore, assigning the labels accordingly yields the correct adjacencies. Since we only use $\Oh(k)$ labels, the width of the constructed linear clique expression is $\Oh(k)$. Clearly, such a linear clique expression can be constructed in polynomial time given the instance $H$.
\end{proof}

\begin{proof}[Proof of \cref{theo:lb}]
    The $W[1]$-hardness follows from \cref{cor:binomial-multi-colored-clique}, since the reduction presented above fulfills all the requirements of a parameterized reduction.
    For the latter part, assume that there exists an algorithm solving $d$-\ClqPartp in time $n^{o(k^{d-1})}$ on graphs of linear clique-width $k$ when given together with a linear $k$-clique expressions. We show that this would contradict ETH by providing an algorithm solving $(d-1)$-\textsc{Binomial Multi-Colored Clique} in time $n^{o\big(\binom{k}{d}\big)}$.

    Given an instance $H$ of $(d-1)$-\textsc{Binomial Multi-Colored Clique}, we construct the graph $\hat{G}$ together with a linear $k$-clique-expression of $\hat{G}$ of width $\Oh(k)$ as described above. By \cref{lem:lb-linear-cw} this can be done in polynomial time. It holds by \cref{lem:lb-prt-if-clq} and \cref{lem:lb-clq-if-prt} that $H$ has a multi-colored clique if and only if $\hat{G}$ has a $d$-clique partitioning. Hence, we can solve the instance $H$ by running the assumed algorithm on the instance $\hat{G}$. Since the size of $\hat{G}$ is bounded polynomially, this algorithm solves the instance $H$ in time $n^{o(k^{d-1})}$, contradicting ETH.
\end{proof}

\section{Conclusion}\label{section:conclusion}

We have presented an algorithm that for each $d\geq 3$ solves $d$-\ClqPackp in $n^{\Oh(k^{d-1})}$ time. We complemented this by ruling out algorithms running in $n^{o(k^{d-1})}$ time assuming ETH, even for the special case of $d$-\ClqPartp.

It would be interesting to further explore the packing problems for (induced) $H$-subgraphs for fixed graph $H$ relative to clique-width. For the induced case, we expect the same tight bound. Some preliminary results suggest that (unsurprisingly) things get complicated for packing $H$-subgraphs, depending on $H$. Are there other natural problems with optimal XP-algorithms with similar running times relative to clique-width?

\bibliography{ref}

\end{document}